\renewenvironment{proof}{\vspace{.1cm}\noindent{\sc Proof.}\hspace{0.10cm}\,\,}{$\hfill\Box$\vspace{.1cm}} 
\newtheorem{theorem}            {Theorem}[section] 
\newtheorem{definition}         [theorem]{Definition}
\newtheorem{lemma}              [theorem]{Lemma} 
\newtheorem{proposition}		[theorem]{Proposition}
\newtheorem{example}		[theorem]{Example}
\newtheorem{remark}	      [theorem]{Remark}
\newtheorem{fact}         [theorem]{Fact}
\newcommand{\bbm}[1]{\left[\begin{matrix} #1 \end{matrix}\right]}
\def\BibTeX{{\rm B\kern-.05em{\sc i\kern-.025em b}\kern-.08em
    T\kern-.1667em\lower.7ex\hbox{E}\kern-.125emX}}
\begin{document}
\title{Neural network based control of unknown nonlinear systems via contraction analysis}
\author{Hao Yin, Claudio De Persis, Bayu Jayawardhana, Santiago Sánchez Escalonilla Plaza 
\thanks{The authors are with the Jan C. Willems Center for Systems and Control, Faculty of Science and Engineering, University of Groningen, 9747AG Groningen, The Netherlands. {\tt\small hao.yin@rug.nl, c.de.persis@rug.nl, b.jayawardhana@rug.nl, and santiago.sanchez@rug.nl.}}
\thanks{This work was supported by NWO Digital Twin Project 05 - Autonomous Process $\&$ Control Reconfiguration and Optimization (200274).}}

\maketitle

\begin{abstract}
This paper studies the design of neural network (NN)-based controllers for unknown nonlinear systems, using contraction analysis. A Neural Ordinary Differential Equation (NODE) system is constructed by approximating the unknown draft dynamics with a feedforward NN. Incremental sector bounds and contraction theory are applied to the activation functions and the weights of the NN, respectively. It is demonstrated that if the incremental sector bounds and the weights satisfy some non-convex conditions, the NODE system is contractive. To improve computational efficiency, these non-convex conditions are reformulated as convex LMI conditions. Additionally, it is proven that when the NODE system is contractive, the trajectories of the original autonomous system converge to a neighborhood of the unknown equilibrium, with the size of this neighborhood determined by the approximation error. For a single-layer NN, the NODE system is simplified to a continuous-time Hopfield NN. If the NODE system does not satisfy the contraction conditions, an NN-based controller is designed to enforce contractivity. This controller integrates a linear component, which ensures contraction through suitable control gains, and an NN component, which compensates for the NODE system's nonlinearities. This integrated controller guarantees that the trajectories of the original affine system converge to a neighborhood of the unknown equilibrium. The effectiveness of the proposed approach is demonstrated through two illustrative examples.
\end{abstract}

\section{Introduction}
The universal approximation theorem \cite{hornik1989multilayer} states that a multilayer neural network (NN) can approximate any nonlinear function $f(x)$. Due to their powerful approximation capabilities, NNs have garnered significant attention in control theory, particularly in model identification \cite{chen1990non,wang2017new}. However, since control theory primarily focuses on analyzing and designing strategies for dynamical systems to achieve desired behaviors such as convergence, and robustness \cite{khalil2002control}, this raises two fundamental questions: 1. Given a neural network $\mathcal{N}_{\theta}$, can we determine the properties of the Neural Ordinary Differential Equation (NODE) system $\dot{x}=\mathcal{N}_{\theta}$? Moreover, if the properties of the NODE system are known, can we infer characteristics of the original system $\dot{x}=f(x)$? 2. If $f(x)$ is unknown and only data generated by $f(x)$ is available, can we design a controller based on the NODE dynamics to ensure desirable system properties, such as convergence? Although extensive research has analyzed the properties of NN, including safety verification and robustness \cite{fazlyab2020safety}, the properties of NODE system have been relatively underexplored. In \cite{kang2021stable}, the author analyze the stability of the NODE with respect to a specific equilibrium. Regarding the second question, some recent works provide relevant insights. In \cite{samanipour2023stability}, the authors conduct a stability analysis and develop controllers for continuous-time dynamical systems where the vector field is modeled using an NN, obtained via system identification techniques. Similarly, \cite{zhou2022neural} proposes a learning framework that stabilizes an unknown nonlinear system using an NN controller while simultaneously learning a neural Lyapunov function to certify the region of attraction of the closed-loop system. 
Additionally, \cite{wang2023model} introduces a semidefinite data-driven program for rigorously verifying the stability and safety of an NN-controlled discrete-time linear system with unknown dynamics. Furthermore, \cite{chen2023adaptive} presents homotopy-based PI algorithms for both linear and nonlinear unknown systems, aiming to derive a stabilizing control policy through initialization. A common assumption in these works is that the system's equilibrium is known and located at the origin. However, when $f(x)$ is unknown,
determining the equilibrium becomes a significant challenge. In such cases, traditional control theory methods may no longer be directly applicable.

Contraction analysis \cite{lohmiller1998contraction} provides a powerful framework for studying the stability and convergence properties of nonlinear systems. Unlike traditional stability methods that focus on equilibrium points or particular trajectories, contraction theory investigates whether all system trajectories converge exponentially toward each other. This characteristic enables the analysis of the system's convergence properties without requiring knowledge of specific equilibrium values or trajectories.
The study of contractive systems has gained popularity in recent years, primarily due to their advantageous properties, including robustness to disturbances and incremental input-to-state stability \cite{angeli2002lyapunov}. Numerous methods have been developed in the literature to analyze the contractivity of nonlinear systems, such as those presented in \cite{aminzare2014contraction,barabanov2019contraction,forni2013differential,lohmiller1998contraction,hu2024enforcing,eising2023cautious,jouffroy2010tutorial}. A system is contractive if and only if the corresponding variational system has uniform global exponential stability, as shown in \cite{barabanov2019contraction,jouffroy2010tutorial}. In \cite{lohmiller1998contraction}, it is demonstrated that the contraction property is guaranteed when the largest eigenvalue of the symmetric part of the matrix in the variational system remains uniformly strictly negative. This condition can be interpreted as requiring the matrix measure of the Jacobian, with respect to the 2-norm, to be strictly negative, as noted in \cite{aminzare2014contraction}. The Finsler-Lyapunov functions were introduced in \cite{forni2013differential} as a tool to assess the exponential stability of the variational system, specifically the incremental exponential stability of the system. 

Data have been used to assess the contractivity of systems, as in \cite{eising2023cautious}, or to enforce contractivity by data-based feedback design, as in \cite{hu2024enforcing}.

Recently, two key approaches have emerged linking NN with contraction analysis. The first focuses on analyzing the contractivity of NN. 
In \cite{centorrino2023euclidean,davydov2024non}, the authors investigate the contractivity of various continuous-time NN, including Hopfield, firing rate, Persidskii, and Lur'e networks, using Euclidean norms or non-Euclidean $l_1/l_\infty$ norms. Similarly, in \cite{centorrino2024modeling}, the contractivity of Hopfield NNs, modeled with dynamic recurrent connections undergoing Hebbian learning, is also analyzed. The second approach involves using NN to approximate the contraction metric of the system and designing controllers based on NN. For example, in \cite{sun2021learning}, a deep NN is trained to learn a control contraction metric, which is subsequently used to design a controller to solve output tracking problems. When approximation errors are present, this controller guarantees probabilistic convergence to trajectories that are close to the reference trajectory. In \cite{tsukamoto2021contraction}, the authors introduce a neural contraction metric specifically designed for contractive systems, parameterized through a recurrent NN. An NN-based control strategy is proposed to address the trajectory tracking problem. The steady-state upper bound of the difference between the target and the learned trajectories is directly proportional to the approximation error. The paper \cite{d2023incremental} proposes a novel sufficient condition for ensuring incremental input-to-state stability in a specific class of recurrent NN. This condition is useful for the design of recurrent NN-based control systems that maintain incremental input-to-state stability. It results in a Linear Matrix Inequality (LMI) constraint for certain Recurrent NN, while not considering the approximation error.

In this paper, we present a novel approach for designing NN-based controllers for continuous-time nonlinear affine systems with unknown drift dynamics. Our approach leverages the principles of contraction analysis to achieve effective control. The approach begins by constructing a Neural Ordinary Differential Equation (NODE) \cite{chen2018neural} system to approximate the autonomous system, where the unknown drift dynamics are estimated using a feedforward NN. A critical challenge in employing NN for system analysis and control lies in ensuring that the approximated systems remain contractive. To address this, we impose constraints on the activation functions using incremental sector bounds and leverage contraction analysis to validate the NN weights. This leads to the derivation of non-convex conditions, providing criteria under which the NODE system is guaranteed to be contractive. To enhance practicality and simplify implementation, we transform the non-convex conditions into a convex form. When using a single-hidden-layer NN, the NODE system can be viewed as a Hopfield NN in continuous time \cite{davydov2024non}. 
A key finding of this study is that ensuring the contractivity of the NODE system guarantees the convergence of the trajectories of the original autonomous system to a neighborhood around the unknown equilibrium point. The size of this neighborhood is explicitly characterized by the approximation error of the NODE model. If the constructed NODE system is not contractive, we propose an NN-based controller for the NODE system. Inspired by the ``Nonlinearity Cancellation" method \cite{de2023learning}, to enforce contractivity within the system. The controller comprises two components: a linear control element, which enforces contraction through adjustable control gains, and an NN component, which compensates for the nonlinearities of the NODE system. This integrated control structure not only maintains the contractivity of the NODE system but also ensures that the trajectories of the nonlinear affine system converge to a specified neighborhood of the unknown equilibrium point. Additionally, to minimize the size of this neighborhood, we formulate an optimization problem incorporating a convex LMI constraint.

The paper is organized as follows. In Section \ref{secppf}, we present preliminaries and the formulation of the problem. Our main results are proposed in Section \ref{secmr}, where we present a sufficient condition for the contractivity of NODE systems and the design of an NN-based controller.  The numerical simulations are provided
in Section \ref{secss} and the conclusions are given in Section \ref{secc}.

\section{Preliminaries and problem formulation}\label{secppf}
\emph{Notation.} The symbols $\mathbb{R}$, $\mathbb{R}_{+}$ denote the set of real nonnegative real numbers, respectively. $\mathbb{R}^{n}$ denotes the $n$ dimensional Euclidean space. We denote the identity matrix with appropriate dimension by $I$. Given a matrix $A=[a_{ij}]$, the symbols $A^{\top}$ refer to the transpose of $A$. For a square matrix $A$, $\lambda(A)$ refers to the set of eigenvalues of $A$. For symmetric metrics $B$ and $C$, $B>0$ ($B\geq 0$) indicates that $B$ is positive definite (positive semidefinite) and $B<0$ ($B\leq 0$) indicates that $B$ is negative definite (negative semidefinite), $B< C$ ($B\leq C$) means $B-C<0$ ($B-C\leq 0$). $\mathrm{BD}$ represents the $\mathrm{blockdiag}$. The norm $\|\cdot\|$ represents some vector norm, or the corresponding induced matrix norm. Finally, whenever it is clear from the context, the symbol ``$\ast$'' inside a matrix stands for the symmetric elements in a symmetric matrix.

Consider the following class of nonlinear affine system
 \begin{equation}\label{ND0}
\dot{x}(t)=f(x)+gu(x),
\end{equation}
where $x(t)\in \mathbb{R}^{m_{k+1}}$ is the system state, $u:\mathbb{R}^{m_{k+1}}\rightarrow\mathbb{R}^l$ is the control input, $f:\mathbb{R}^{m_{k+1}}\rightarrow \mathbb{R}^{m_{k+1}}$ is the drift dynamics, $g\in\mathbb{R}^{m_{k+1}\times l}$ denotes the control effectiveness constant matrix. We assume that the drift dynamics $f(x)$ is unknown \cite{greene2022approximate,greene2023deep}. For control purposes \cite{de2023learning, martinelli2023unconstrained}, we learn $f(x)$ from a set $\mathcal{D}$ by combining a linear component with a feedforward NN. This is accomplished by setting $u=0$ and solving the following optimization problem:
\begin{align}
\underset{A, W^i, b^i}{\text{minimize}}  \quad & \frac{1}{N}\sum_{k=1}^{N} \|\dot{x}(t_k)-Ax(t_k)-Z_{nn}(x(t_k))\|,\label{aca}
\end{align}
where $N$ is the number of the data used for training in $\mathcal{D}$, and $t_k$ denotes the corresponding sampling time, $A\in \mathbb{R}^{m_{k+1}\times m_{k+1}}$ is a constant matrix, $Z_{nn}(x):\mathbb{R}^{m_{k+1}}\rightarrow \mathbb{R}^{m_{k+1}}$ is an NN as the following form
\begin{equation}\label{NN}
 \begin{aligned}
&w^0=x,
\\&w^{i+1}=\phi^{i+1}(W^{i+1}w^{i}+b^{i+1}), i=0,...,k-1,
\\&Z_{nn}(x)=W^{k+1}w^{k}+b^{k+1},
\end{aligned}
\end{equation}
where $m_{k+1}=m_0$, $w^0\in \mathbb{R}^{m_{k+1}}$ is the input to the network, $w^{i}\in\mathbb{R}^{m_{i}}$, $W^{i+1}\in \mathbb{R}^{{m_{i+1}}\times m_{i}}$, $b^{i+1}\in \mathbb{R}^{m_{i+1}}$ for are the output, weight matrix, and bias vector of the $i$th layer, respectively. The nonlinear activation function $\phi^{i+1}:\mathbb{R}^{m_{i+1}}\rightarrow \mathbb{R}^{m_{i+1}}$ is of the form
\begin{equation}\label{NNa}
 \begin{aligned}
\phi^{i+1}(v):=[\varphi^{m_1}(v_{m_1}),\cdots,\varphi^{m_{i+1}}(v_{m_{i+1}})]^{\top},
\end{aligned}
\end{equation}
where $\varphi^{m_i}: \mathbb{R}\rightarrow \mathbb{R}$ is the activation function of each neuron. We focus on differentiable activation functions, such as the sigmoid and tanh functions. Subsequently, we derive the incremental sector bound for the activation functions, which serves as a fundamental component supporting our main results.
\begin{definition}\label{definition1}[Incremental sector bound]
Let $\alpha^i, \beta^i\in \mathbb{R}$ with $\alpha^i\leq \beta^i$. The activation function $\varphi^i$ is incrementally sector bounded by $\bbm{\alpha^i,&\beta^i}$ in the set $\mathcal{D}$ if
\begin{equation}\label{ISB}
 \begin{aligned}
&\Big(\varphi^i(x)-\varphi^i(y)-\alpha^i(x-y)\Big) \Big(\varphi^i(x)-\varphi^i(y)-\beta^i(x-y)\Big)\\&\leq 0, \   \forall x, y\in \mathcal{D}.
\end{aligned}
\end{equation}
\end{definition}

In contrast to the offset local sector bound defined in \cite[Def. 4]{yin2021stability}, the incremental sector bound focuses on the relationship between the activation function values at two distinct states, $x$ and $y$, rather than comparing the state $x$ to a specific point $y^*$. Since the equilibrium is unknown in our case, the offset local sector bound is not suitable for our approach. In Proposition \ref{proposition3}, we demonstrate that the incremental sector bound facilitates ensuring the convergence of a nonlinear system to an unknown equilibrium. Many activation functions, e.g. the sigmoid and tanh functions, are constrained by an incremental bound within the set $\mathcal{D}$. 
The incremental sector bound can be determined as follows: due to $\varphi^{i}(\cdot)$ is a scalar function, applying the mean value theorem, we obtain $\varphi^i(x)-\varphi^i(y)=\frac{\partial \varphi^i}{\partial x} (y^*)(x-y)$, where $y^*\in \bbm{x,&y }$. Subsequently, the incremental sector bound is derived from the upper and lower bounds of $\frac{\partial \varphi^i}{\partial x} (y^*)$, for all $y^*\in\mathcal{D}$. In this paper, we specifically consider the tanh function as the activation function for the NN. To incorporate the incremental sector bound of $\varphi^{i}(\cdot)$ into $\phi(\cdot)$, we present the following fact.
\begin{fact}
 \label{fact1}[Incremental sector bound for vector]: 
The activation function $\phi^i$ given by \eqref{NNa} with incrementally sector bounded activation functions $\varphi^i$ satisfies
\begin{equation}\label{ISBV}
\begin{aligned}
&\bbm{\phi^i(x)-\phi^i(y)-K_1^i (x-y)}^\top\bbm{\phi^i(x)-\phi^i(y)-K_2^i(x-y)}\\&\leq 0, \  \forall x, y\in \mathcal{D},
\end{aligned}
\end{equation}
where $K_1^i =\mathrm{diag}\{\alpha^i\}$, $K_2^i =\mathrm{diag}\{\beta^i\}$.
\end{fact}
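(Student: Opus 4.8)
The plan is to exploit the diagonal structure of $K_1^i$ and $K_2^i$ together with the componentwise action of $\phi^i$ in \eqref{NNa}, reducing the vector inequality \eqref{ISBV} to a sum of scalar inequalities, each of which is precisely the hypothesis \eqref{ISB}. Writing $n=m_i$ for the number of neurons in the layer and $x=[x_1,\dots,x_n]^\top$, $y=[y_1,\dots,y_n]^\top$, the key observation is that the $j$-th entry of $\phi^i(x)-\phi^i(y)-K_1^i(x-y)$ depends only on the pair $(x_j,y_j)$: since $\phi^i$ applies the scalar map $\varphi^i_j$ in coordinate $j$ and $K_1^i=\mathrm{diag}\{\alpha^i\}$ is diagonal with $j$-th entry $\alpha^i_j$, this entry equals $\varphi^i_j(x_j)-\varphi^i_j(y_j)-\alpha^i_j(x_j-y_j)$, and analogously the $j$-th entry of $\phi^i(x)-\phi^i(y)-K_2^i(x-y)$ equals $\varphi^i_j(x_j)-\varphi^i_j(y_j)-\beta^i_j(x_j-y_j)$.

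First I would expand the matrix inner product on the left-hand side of \eqref{ISBV} as a dot product over coordinates, obtaining
\begin{equation*}
\sum_{j=1}^n \Big(\varphi^i_j(x_j)-\varphi^i_j(y_j)-\alpha^i_j(x_j-y_j)\Big)\Big(\varphi^i_j(x_j)-\varphi^i_j(y_j)-\beta^i_j(x_j-y_j)\Big).
\end{equation*}
Next I would apply Definition \ref{definition1} neuron-by-neuron: for each $j$ the scalar activation $\varphi^i_j$ is incrementally sector bounded by $[\alpha^i_j,\beta^i_j]$, so evaluating \eqref{ISB} at the scalar arguments $x_j,y_j$ shows that the $j$-th summand is nonpositive. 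A finite sum of nonpositive terms is nonpositive, which yields \eqref{ISBV} and closes the argument.

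The reasoning is essentially mechanical, so rather than a genuine obstacle the delicate point will be the bookkeeping about the domain: the scalar bound \eqref{ISB} is available only for arguments lying in the scalar sector-bound region, so one must read the hypothesis $x,y\in\mathcal{D}$ in \eqref{ISBV} as guaranteeing componentwise membership of $(x_j,y_j)$ in that region, i.e. $\mathcal{D}$ should be taken as a box (product) set, or the per-neuron bounds should be understood as valid on the coordinate projections of $\mathcal{D}$. The only structural ingredient that does real work is the diagonality of $K_1^i$ and $K_2^i$: it is exactly what makes the quadratic form separable across coordinates and eliminates cross terms between distinct neurons, so the same statement would fail for a non-diagonal choice of multipliers.
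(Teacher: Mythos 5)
Your proof is correct and follows essentially the same route as the paper's: both expand the inner product coordinatewise, using the diagonality of $K_1^i$, $K_2^i$ and the componentwise action of $\phi^i$, so that each summand is a scalar instance of \eqref{ISB} and the sum of nonpositive terms is nonpositive. If anything, your version is more carefully indexed (the paper's displayed sum reuses the layer index $i$ as the summation index and suppresses the coordinate arguments $x_j,y_j$), and your remark about reading $x,y\in\mathcal{D}$ componentwise is a legitimate bookkeeping point the paper glosses over.
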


\begin{proof}
By using \eqref{NNa} and \eqref{ISB}, we have
\begin{equation}\label{ISBV1}
 \begin{aligned}
&\bbm{\phi^i(x)-\phi^i(y)-K_{1}^i(x-y)}^\top\bbm{\phi^i(x)-\phi^i(y)-K_{2}^i(x-y)}=\\&\sum_{i=1}^{k}\Big(\varphi^i(x)-\varphi^i(y)-\alpha^i(x-y)\Big) \Big(\varphi^i(x)-\varphi^i(y)-\beta^i(x-y)\Big)\\&\leq 0. 
 \end{aligned}
\end{equation}
The proof is complete.\end{proof}\\
In \cite[Lem. 2]{fazlyab2020safety}, the authors analyze the vector incremental sector bounds under the assumption that all activation functions share the same slope restrictions, i.e., $\alpha^i$ and $\beta^i$ are identical. In contrast, we extend this framework to accommodate activation functions with different slope restrictions.

In the following lemma, we incorporate the incremental sector bound condition from Fact \ref{fact1} for all layers in \eqref{NN}. 

Define 
\begin{equation}\label{NNVW}
 \begin{matrix}
 w_j^\phi =\bbm{
w_j^0 \\\vdots \\
w_j^{k-1}
}, & \text{and} & \bar{w}_j^\phi =\bbm{
w_j^1 \\\vdots \\
w_j^k
},
\end{matrix}
\end{equation}
for $j=1,2$, where, $w_j^{k}$ represents the $k$th layer of NN given in \eqref{NN} with different input $x_j$, i.e., $w_j^0=x_j$.   
\begin{lemma}\label{lemma1}
Assume $\phi^i$ is incremental sector bounded by $[K_1^i, K_2^i]$. Then, the following inequality holds 
\begin{equation}\label{LEM1}
 \begin{aligned}
\begin{bmatrix}
w_1^{\phi}-w_2^{\phi}
 \\
\bar{w}_1^{\phi}-\bar{w}_2^{\phi}
\end{bmatrix}^\top \begin{bmatrix}
 -2Q_1^\top Q_2&Q_1^{\top}+Q_2^\top
 \\
\ast&-2I
\end{bmatrix}\begin{bmatrix}
w_1^{\phi}-w_2^{\phi}
 \\
\bar{w}_1^{\phi}-\bar{w}_2^{\phi}
\end{bmatrix}\geq 0, 
\end{aligned}
\end{equation}
for all $x_j\in \mathcal{D}$, where $Q_1=\mathrm{BD}(K_1^1W^1, \dots, K_1^{k}W^{k})$, $Q_2=\mathrm{BD}(K_2^1W^1, \dots, K_2^{k}W^{k})\in \mathbb{R}^{\sum_{i=0}^{k-1} m_{i+1}\times \sum_{i=0}^{k-1} m_{i}}$.
\end{lemma}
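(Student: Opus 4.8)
The plan is to apply Fact~\ref{fact1} one layer at a time and then repackage the resulting $k$ scalar inequalities into the single matrix inequality \eqref{LEM1}, exploiting the block-diagonal structure of $Q_1$ and $Q_2$. Throughout, fix two inputs $x_1, x_2 \in \mathcal{D}$, propagate each through the network \eqref{NN} to obtain the layer values $w_1^i, w_2^i$, and abbreviate the layerwise differences by $\Delta^i := w_1^i - w_2^i$ for $i = 0, \dots, k$.

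First I would treat the $i$-th activation layer, $w_j^i = \phi^i(W^i w_j^{i-1} + b^i)$ for $i = 1, \dots, k$, and apply Fact~\ref{fact1} to the two arguments $W^i w_1^{i-1} + b^i$ and $W^i w_2^{i-1} + b^i$. The key point is that the bias $b^i$ cancels in the difference, so the output difference is $\Delta^i$ while the argument difference is $W^i \Delta^{i-1}$. Fact~\ref{fact1} then gives, for every layer,
\[
[\Delta^i - K_1^i W^i \Delta^{i-1}]^\top [\Delta^i - K_2^i W^i \Delta^{i-1}] \leq 0.
\]

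Next I would stack these inequalities. Writing $a := w_1^\phi - w_2^\phi = [\Delta^0; \dots; \Delta^{k-1}]$ and $b := \bar w_1^\phi - \bar w_2^\phi = [\Delta^1; \dots; \Delta^k]$ as in \eqref{NNVW}, the definitions of $Q_1, Q_2$ guarantee that the $i$-th block of $Q_j a$ is exactly $K_j^i W^i \Delta^{i-1}$, while the $i$-th block of $b$ is $\Delta^i$. Since an inner product of stacked vectors equals the sum of the block-wise inner products, summing the $k$ per-layer inequalities is equivalent to the single relation $(b - Q_1 a)^\top (b - Q_2 a) \leq 0$.

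Finally I would expand this product, multiply by $-2$ to reverse the inequality, and regroup. The two cross terms merge via $b^\top(Q_1 + Q_2) a = a^\top(Q_1^\top + Q_2^\top) b$, producing precisely the quadratic form in \eqref{LEM1} in the stacked variable $[a^\top, b^\top]^\top$. I expect the only genuine obstacle to be the index bookkeeping in the stacking step — confirming that the block ordering built into $Q_1, Q_2$ lines up with the layer ordering in $w_j^\phi$ and $\bar w_j^\phi$, so that the stacked inner product reproduces the layerwise sector bounds term by term; the remaining sign and factor manipulations are routine.
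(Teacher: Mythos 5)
Your proposal is correct and follows essentially the same route as the paper: the paper's proof likewise rewrites the quadratic form, using the block-diagonal structure of $Q_1$ and $Q_2$, as $-2\sum_{i=0}^{k-1}\bigl[(w_1^{i+1}-w_2^{i+1})-K_1^{i+1}W^{i+1}(w_1^{i}-w_2^{i})\bigr]^\top\bigl[(w_1^{i+1}-w_2^{i+1})-K_2^{i+1}W^{i+1}(w_1^{i}-w_2^{i})\bigr]$ and invokes Fact~\ref{fact1} layer by layer (with the bias cancelling in the difference), exactly as you do. The only difference is direction of presentation — you assemble the per-layer inequalities into the quadratic form, while the paper decomposes the quadratic form into the per-layer terms — which is immaterial.
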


\begin{proof}
Substitute $Q_1$, $Q_2$ into \eqref{LEM1}, and using Fact \ref{fact1}, one has
\begin{equation}\label{LEMP1}
 \begin{aligned}
&\begin{bmatrix}
w_1^{\phi}-w_2^{\phi}
 \\
\bar{w}_1^{\phi}-\bar{w}_2^{\phi}
\end{bmatrix}^\top \begin{bmatrix}
 -2Q_1^\top Q_2&Q_1^\top+Q_2^\top
 \\
\ast&-2I
\end{bmatrix}\begin{bmatrix}
w_1^{\phi}-w_2^{\phi}
 \\
\bar{w}_1^{\phi}-\bar{w}_2^{\phi}
\end{bmatrix}=\\&-2\sum_{i=0}^{k-1}[(w_1^{i+1}-w_2^{i+1})-K_1^{i+1}W^{i+1}(w_1^{i}-w_2^{i})]^\top\times\\&[(w_1^{i+1}-w_2^{i+1})-K_2^{i+1}W^{i+1}(w_1^{i}-w_2^{i})]\geq 0.
\end{aligned}
\end{equation}
The proof is complete.\end{proof}

Based on universal approximation theorem \cite{hornik1989multilayer}, the nonlinear affine system \eqref{ND0} can be rewritten as
 \begin{equation}\label{ND1}
\dot{x}=Ax+Z_{nn}(x)+gu(x)+\varepsilon(x), 
\end{equation}
for all $x\in \mathcal{D}$, where $\varepsilon(x):\mathbb{R}^{m_{k+1}}\rightarrow \mathbb{R}^{m_{k+1}}$ is an approximation error.


We aim to analyze the contractivity of the approximated NODE system, which is given by the following equation:
 \begin{equation}\label{NDL}
\dot{x}_{nn}(t)=Ax_{nn}+Z_{nn}(x_{nn}), 
\end{equation}
and to design a state feedback controller $u(x)$ to ensure the convergence of the system described by \eqref{ND1}, utilizing the contractivity of the NODE system in \eqref{NDL}.

\section{Main result}\label{secmr}
In this section, we assume that there exists a subset $\mathcal{C}\subseteq \mathcal{D}$ such that the NODE system \eqref{NDL} is forward complete and forward invariant on $\mathcal{C}$, i.e., for any $x_{nn}(0)\in \mathcal{C}$, the trajectory of the NODE system \eqref{NDL}, $x_{nn}(t)$ exists and satisfies $x_{nn}(t)\in\mathcal{C}$, for all $t\geq 0$ \cite{strasser2024koopman}. We begin by analyzing the contractivity of \eqref{NDL} in $\mathcal{C}$. Then, based on the contractive property, we proceed to design an NN-based controller for the original system \eqref{ND0} to guarantee the convergence property.

\subsection{Contractive neural ordinary differential equation}\label{CNODE}
We establish conditions on the weights $W^{i+1}$, and bias $b^{i+1}$ in \eqref{NN} to guarantee that the NODE system \eqref{NDL} exhibits the contraction property by using the following definition: 
\begin{definition}\label{d1}
A NODE system \eqref{NDL} is called {\em contractive} in a forward invariant set $\mathcal{C}$, if there exist positive numbers $c$ and $\alpha$ such that for any pair of solutions $x_{nn1}(t)$, $x_{nn2}(t)$ with the initial condition $x_{nn1}(0)$, $x_{nn2}(0)\in \mathcal{C}$, the following inequality holds:
\begin{equation} \label{eq:d1}
\begin{aligned}
\|
x_{nn1}(t)-x_{nn2}(t)
\|\leq ce^{-\alpha t}\|
x_{nn1}(0)-x_{nn2}(0)
\|, \forall t\geq 0.
\end{aligned}
\end{equation}
\end{definition}

For a clearer understanding of this paper, we first introduce the construction of a contractive NODE system using single-layer neural networks, followed by an extension to multi-layer architectures.

\subsubsection{‌Hopfield (Single-layer) neural network case} 
It has been demonstrated in \cite{lu2021learning} that a single-hidden-layer neural network can approximate any continuous nonlinear function with arbitrary precision, given a sufficient number of neurons. This property makes the
single-hidden layer NN ($k=1$) commonly used for system representation in control theory \cite{d2023incremental,kamalapurkar2016model,samanipour2023stability}. 
For a single-hidden layer NN, the NODE system \eqref{NDL} can be considered as a continuous-time Hopfield NN \cite{davydov2024non}, the incremental sector bounds are determined by $W^1x + b^1$. Because $W^1x + b^1$ is convex, its upper and lower bounds can be easily identified on the boundary of the set $\mathcal{C}$, thereby simplifying the computation of the incremental sector bounds. In the following lemma, we derive the contraction conditions for the continuous-time Hopfield NN.

\begin{theorem}\label{CO0}
Suppose there exist a positive definite matrix $P$, positive constants $\underline{p}$, $\overline{p}$, and $\gamma$, such that the following conditions hold: 
\begin{equation}\label{TH110}
 \underline{p}I\leq P\leq \overline{p}I, 
\end{equation}
\begin{equation}\label{TH120}
\begin{aligned}
&\begin{bmatrix}
A^\top P+PA+\gamma P-2\alpha^1\beta^1W^{1\top}W^1 &L\\\ast
 & -2I
\end{bmatrix}\leq 0,
\end{aligned}
\end{equation}
where $L=PW^2+(\alpha^1+\beta^1)W^{1\top}$, $W^1$, $W^2$, $\alpha^1$, $\beta^1$ are given in \eqref{NN}, \eqref{ISB}, respectively. Then, the NODE system \eqref{NDL} is contractive in $\mathcal{C}$.
\end{theorem}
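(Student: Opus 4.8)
The plan is to use the incremental (two-trajectory) formulation of contraction together with an S-procedure argument whose block structure matches the LMI \eqref{TH120} exactly. First I would fix any two solutions $x_{nn1}(t)$, $x_{nn2}(t)$ of \eqref{NDL} with initial data in $\mathcal{C}$, set $e(t)=x_{nn1}(t)-x_{nn2}(t)$, and write the error dynamics. For the single-layer case $Z_{nn}(x)=W^2\phi^1(W^1x+b^1)+b^2$, so the bias cancels in the difference and the linear part combines to give $\dot e = Ae + W^2\Delta\phi$, where $\Delta\phi := \phi^1(W^1x_{nn1}+b^1)-\phi^1(W^1x_{nn2}+b^1)$. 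Because the forward-invariance hypothesis of Section \ref{secmr} keeps both trajectories in $\mathcal{C}$ for all $t\geq 0$, Fact \ref{fact1} applies along the trajectories with argument increment $W^1e$, yielding the quadratic inequality $\|\Delta\phi\|^2 - (\alpha^1+\beta^1)\,e^\top W^{1\top}\Delta\phi + \alpha^1\beta^1\,e^\top W^{1\top}W^1 e \leq 0$.

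Next I would take the incremental Lyapunov function $V(e)=e^\top P e$ and differentiate along the error dynamics, obtaining $\dot V = e^\top(A^\top P+PA)e + 2e^\top PW^2\Delta\phi$. Collecting $e$ and $\Delta\phi$ into $\zeta=[e^\top,\Delta\phi^\top]^\top$, the quantity $\dot V+\gamma V$ is a quadratic form $\zeta^\top M_0\zeta$ whose blocks are $A^\top P+PA+\gamma P$, $PW^2$, and $0$, while the sector inequality reads $\zeta^\top S\zeta\leq 0$ with $S$ having blocks $\alpha^1\beta^1 W^{1\top}W^1$, $-\tfrac{1}{2}(\alpha^1+\beta^1)W^{1\top}$, and $I$. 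The key step is the S-procedure with multiplier $\lambda=2$: the LMI \eqref{TH120} is precisely the statement $M_0-2S\leq 0$, since its $(1,1)$, $(1,2)$, $(2,2)$ blocks equal $A^\top P+PA+\gamma P-2\alpha^1\beta^1 W^{1\top}W^1$, $L=PW^2+(\alpha^1+\beta^1)W^{1\top}$, and $-2I$. Writing $\dot V+\gamma V = \zeta^\top(M_0-2S)\zeta + 2\zeta^\top S\zeta$, the first term is nonpositive by \eqref{TH120} and the second is nonpositive by the sector bound, so $\dot V \leq -\gamma V$.

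Finally, from $\dot V\leq -\gamma V$ I would integrate to get $V(e(t))\leq e^{-\gamma t}V(e(0))$, then sandwich $V$ using \eqref{TH110}, namely $\underline{p}\|e\|^2\leq V(e)\leq \overline{p}\|e\|^2$, to recover \eqref{eq:d1} with $c=\sqrt{\overline{p}/\underline{p}}$ and $\alpha=\gamma/2$. I expect the main obstacle to be the bookkeeping of the S-procedure: correctly rewriting the incremental sector bound of Fact \ref{fact1} as $\zeta^\top S\zeta\leq 0$ in the augmented variable, in particular getting the factor-of-two in the cross term right, and then verifying that the stated $L$ and the $-2I$ block are exactly what the multiplier $\lambda=2$ produces. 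Once that identification is made, the sign analysis of the two quadratic forms and the exponential integration are routine; a minor but essential point to state explicitly is that forward invariance of $\mathcal{C}$ is what licenses applying the sector bound along the entire trajectory.
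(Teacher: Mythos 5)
Your proposal is correct and follows essentially the same route as the paper: an incremental Lyapunov function $V(e)=e^\top Pe$, the single-layer specialization of the incremental sector bound (your $\zeta^\top S\zeta\le 0$ is exactly the paper's inequality \eqref{LEM10}, up to the factor $-2$), the identification of \eqref{TH120} as the sector-augmented quadratic form (your $M_0-2S\le 0$ is precisely how the paper combines \eqref{TH120} with \eqref{LEM10} to get $\dot V\le-\gamma V$), and the comparison-lemma plus \eqref{TH110} conclusion with $c=\sqrt{\overline{p}/\underline{p}}$, $\alpha=\gamma/2$. As a side note, your error dynamics correctly carry the outer weight $W^2$ in the cross term, where the paper's displayed derivative \eqref{CO0P1} has a typo writing $W^1$; the intended matrix is $W^2$, consistent with $L=PW^2+(\alpha^1+\beta^1)W^{1\top}$.
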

\begin{proof}
  For single-layer NN, we have $w_j^\phi=x_j$, $\bar{w}_j^\phi=w_j^1$, $Q_1=\alpha^1W^1$, and $Q_2=\beta^1W^1$. Then the inequality \eqref{LEM1} can be rewritten as
  \begin{equation}\label{LEM10}
 \begin{aligned}
&\begin{bmatrix}
x_{nn1}-x_{nn2}
 \\
w_{nn1}^1-w_{nn2}^1
\end{bmatrix}^\top \begin{bmatrix}
 -2\alpha^1\beta^1W^{1\top} W^1 &(\alpha^1+\beta^1)W^{1\top}
 \\
\ast&-2I
\end{bmatrix}\\&\times\begin{bmatrix}
x_{nn1}-x_{nn2}
 \\
w_{nn1}^1-w_{nn2}^1
\end{bmatrix}\geq 0, 
\end{aligned}
\end{equation}
By defining the Lyapunov function $V(x_{nn1}-x_{nn2})=\bbm{x_{nn1}-x_{nn2}}^\top P \bbm{x_{nn1}-x_{nn2}}$, and using \eqref{NN}, \eqref{TH120}, \eqref{LEM10}, one has  
\begin{equation}\label{CO0P1}
\begin{aligned}
&\dot{V}(x_{nn1}-x_{nn2})\overset{\eqref{NN}}{=}\\&\bbm{x_{nn1}-x_{nn2}}^\top( A^\top P+PA)\bbm{x_{nn1}-x_{nn2}}+\\&\bbm{w_{nn1}^1-w_{nn2}^1}^\top W^{1\top}P\bbm{x_{nn1}-x_{nn2}}+\\&\bbm{x_{nn1}-x_{nn2}}^\top PW^1\bbm{w_{nn1}^1-w_{nn2}^1}\\&\overset{\eqref{TH120}}{\leq}-\gamma \bbm{x_{nn1}-x_{nn2}}^\top P \bbm{x_{nn1}-x_{nn2}}-\begin{bmatrix}
x_{nn1}-x_{nn2}
 \\
w_{nn1}^1-w_{nn2}^1
\end{bmatrix}^\top \\&\times \begin{bmatrix}
 -2\alpha^1\beta^1W^{1\top} W^1 &(\alpha^1+\beta^1)W^{1\top}
 \\
\ast&-2I
\end{bmatrix}\begin{bmatrix}
x_{nn1}-x_{nn2}
 \\
w_{nn1}^1-w_{nn2}^1
\end{bmatrix}\\&\overset{\eqref{LEM10}}{\leq}  -\gamma \bbm{x_{nn1}-x_{nn2}}^\top P \bbm{x_{nn1}-x_{nn2}}\\&=-\gamma V(x_{nn1}-x_{nn2}).
\end{aligned}
\end{equation}
By using the Comparison Lemma and \eqref{TH110}, we have 
\begin{equation}\label{TH1P3}
\begin{aligned}
\|x_{nn1}(t)-x_{nn2}(t)\|\leq \sqrt{\frac{\overline{p} }{\underline{p} } } e^{-\frac{\gamma}{2} t} \|x_{nn1}(0)-x_{nn2}(0)\|.
\end{aligned}
\end{equation}
The proof is thus completed with $c=\sqrt{\frac{\overline{p} }{\underline{p} } }$, $\alpha=\frac{\gamma}{2}$ in \eqref{eq:d1}.
\end{proof}

The contractive condition \eqref{TH120} is non-convex when optimizing over both $P$ and $\gamma$ simultaneously. non-convex conditions can complicate the search for feasible solutions, as the feasible region may be disjoint or have a complex structure \cite{boyd2004convex}. To address this challenge, we will reformulate the contractive condition \eqref{TH120} into a convex LMI to facilitate a more straightforward optimization process.


\begin{theorem}\label{the11} Suppose there exist a positive definite matrix $P$, positive constants $\underline{p}$, $\overline{p}$, and $\mu$, such that  \eqref{TH110} and the following condition
\begin{equation}\label{CO12}
\begin{bmatrix}
-\mu I&P &0\\\ast&-2\alpha^1\beta^1W^{1\top}W^1+A^\top P+PA&L\\
\ast&\ast& -2I
\end{bmatrix}\leq 0,
\end{equation} 
hold, where $L=PW^2+(\alpha^1+\beta^1)W^{1\top}$, $W^1$, $W^2$, $\alpha^1$, $\beta^1$ are given in \eqref{NN}, \eqref{ISB}, respectively. Then the NODE system \eqref{NDL} is contractive in $\mathcal{C}$.
\end{theorem}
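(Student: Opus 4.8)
The plan is to show that the convex LMI \eqref{CO12} implies the non-convex condition \eqref{TH120} of Theorem \ref{CO0} for a suitably chosen $\gamma>0$, after which contractivity of \eqref{NDL} follows immediately from that theorem. The only bilinear obstruction in \eqref{TH120} is the product $\gamma P$, and the role of the auxiliary scalar $\mu$ together with the extra block $-\mu I$ in \eqref{CO12} is precisely to trade this bilinear term for a quadratic term in $P$ that can then be controlled by the a priori bound $P\geq \underline{p}I$ from \eqref{TH110}.

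First I would apply the Schur complement to \eqref{CO12} with respect to the leading block $-\mu I$, which is negative definite. The coupling of this block with the remaining $2\times 2$ block is $\begin{bmatrix} P & 0\end{bmatrix}$, so the Schur complement shows that \eqref{CO12} is equivalent to
$$\begin{bmatrix} A^\top P+PA-2\alpha^1\beta^1 W^{1\top}W^1+\frac{1}{\mu}P^2 & L\\ \ast & -2I\end{bmatrix}\leq 0.$$
This is exactly the matrix of \eqref{TH120}, except that the bilinear term $\gamma P$ has been replaced by the quadratic term $\frac{1}{\mu}P^2$; the off-diagonal and lower-right blocks are identical.

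Next I would relate the two top-left blocks through \eqref{TH110}. Setting $\gamma:=\underline{p}/\mu>0$, the bound $P\geq \underline{p}I=\gamma\mu I$ yields $\frac{1}{\mu}P^2-\gamma P=\frac{1}{\mu}P\,(P-\gamma\mu I)\geq 0$, since $P$ and $P-\gamma\mu I$ are commuting positive semidefinite matrices. Hence $\gamma P\leq \frac{1}{\mu}P^2$, so the top-left block of \eqref{TH120} is dominated by the top-left block of the Schur-complemented inequality above. By monotonicity of the matrix inequality, \eqref{TH120} then holds with this $\gamma$, and since \eqref{TH110} holds by assumption, Theorem \ref{CO0} applies directly to give contractivity in $\mathcal{C}$.

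The main obstacle, and the step requiring care, is the passage from $\frac{1}{\mu}P^2$ back to $\gamma P$: it is essential that the lower bound $\underline{p}I\leq P$ be available, as this is what allows $\gamma$ to be chosen strictly positive and thereby guarantees a genuinely nonzero contraction rate $\alpha=\gamma/2=\underline{p}/(2\mu)$. Without the lower bound in \eqref{TH110}, the quadratic term $\frac{1}{\mu}P^2$ could not be majorized by a positive multiple of $P$, and the reduction to \eqref{TH120} would fail.
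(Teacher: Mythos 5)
Your proof is correct and follows essentially the same route as the paper's: a Schur complement turns the $-\mu I$ block of \eqref{CO12} into the quadratic term $\frac{1}{\mu}P^2$, and the choice $\gamma=\underline{p}/\mu$ together with $P\geq \underline{p}I$ gives $\gamma P\leq \frac{1}{\mu}P^2$, reducing the problem to Theorem \ref{CO0}. The only (cosmetic) difference is that the paper also eliminates the $-2I$ block—producing the term $\frac{1}{2}LL^{\top}$—and then reassembles via the Schur complement of \eqref{TH120}, whereas you keep the $-2I$ block in place and compare the two $2\times 2$ block matrices directly by monotonicity; both arguments are sound.
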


\begin{proof}
By utilizing the Schur complement on \eqref{CO12}, we can express \eqref{CO12} equivalently by   
\begin{equation}\label{COp1}
\begin{aligned}
&\begin{bmatrix}
-\mu I&P\\P&-2\alpha^1\beta^1W^{1\top}W^1+A^\top P+PA
\end{bmatrix}+\frac{1}{2}\begin{bmatrix}
0\\L
\end{bmatrix}\begin{bmatrix}
0&L^\top
\end{bmatrix}\\&=\begin{bmatrix}
-\mu I&P\\P&-2\alpha^1\beta^1W^{1\top}W^1+A^\top P+PA+\frac{1}{2}LL^{\top}
\end{bmatrix}\leq 0.
\end{aligned}
\end{equation}
Applying the Schur complement to \eqref{COp1} once again, we obtain
\begin{equation}\label{COp2}
\begin{aligned}
&-2\alpha^1\beta^1W^{1\top}W^1+A^\top P+PA+\frac{1}{2}LL^{\top}+\frac{1}{\mu}P^2\leq 0.
\end{aligned}
\end{equation}
By choosing $\gamma=\frac{\underline{p}}{\mu}$, we have
\begin{equation}\label{COp3}
\begin{aligned}
&A^\top P+PA+\gamma P-2\alpha^1\beta^1W^{1\top}W^1+\frac{1}{2}LL^{\top}\leq\\&A^\top P+PA+\frac{1}{\mu}P^2-2\alpha^1\beta^1W^{1\top}W^1+\frac{1}{2}LL^{\top}\leq 0.
\end{aligned}
\end{equation}
The inequality \eqref{COp3} is equivalent to the Schur complement of \eqref{TH120}. Since the conditions in Theorem \ref{CO0} are satisfied, its conclusion follows directly.
\end{proof}

\subsubsection{‌Multilayer neural network case}
For the multilayer NN, we present the following theorem outlining the construction of a contractive NODE system.
\begin{theorem}\label{theorem1}
Suppose there exist a positive definite matrix $P$, positive constants $\underline{p}$, $\overline{p}$, and $\gamma$ such that \eqref{TH110} and the following condition 
\begin{equation}\label{TH12}
\begin{aligned}
&U_1^\top
\begin{bmatrix}
     A^\top P+PA+\gamma P &PW^{k+1}\\
  \ast& 0
\end{bmatrix}U_1+\\&U_2^\top\begin{bmatrix}
 -2Q_1^\top Q_2&Q_1^\top+Q_2^\top
 \\
\ast&-2I
\end{bmatrix}U_2\leq 0, \quad\forall k\geq 2,
\end{aligned}
\end{equation}
hold, where $U_1=\bbm{I_{m_{k+1}\times m_{k+1}} &0_{m_{k+1}\times a}&0_{m_{k+1}\times m_{k}}\\0_{m_{k}\times m_{k+1}}&0_{m_{k}\times a}&I_{m_{k}\times m_{k}}}$, $U_2=\bbm{I_{m_{k+1}\times m_{k+1}} &0_{m_{k+1}\times a}&0_{m_{k+1}\times m_{k}}\\0_{a\times m_{k+1}}&I_{a\times a}&0_{a\times m_{k}}\\0_{a\times m_{k+1}}&I_{a\times a}&0_{a\times m_{k}}\\0_{m_{k}\times m_{k+1}}&0_{m_{k}\times a}&I_{m_{k}\times m_{k}}}$, $a=\sum_{i=1}^{k-1}m_i$, $W^{k+1}$, $Q_1$, $Q_2$ are given in \eqref{NN}, \eqref{LEM1}, respectively. Then the NODE system \eqref{NDL} is contractive in $\mathcal{C}$.
\end{theorem}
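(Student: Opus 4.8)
The plan is to mirror the single-layer argument of Theorem \ref{CO0}, using the same quadratic Lyapunov function $V(x_{nn1}-x_{nn2})=(x_{nn1}-x_{nn2})^\top P(x_{nn1}-x_{nn2})$, but now packaging the multilayer structure through the two selection matrices $U_1$ and $U_2$. First I would write down the incremental dynamics: since $Z_{nn}(x)=W^{k+1}w^k+b^{k+1}$, for any two solutions $x_{nn1},x_{nn2}$ the bias cancels and the difference obeys $\frac{d}{dt}(x_{nn1}-x_{nn2})=A(x_{nn1}-x_{nn2})+W^{k+1}(w_1^k-w_2^k)$, where $w_j^k$ is the $k$th-layer output generated from the input $x_{nnj}$. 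Differentiating $V$ along this flow gives $\dot V=(x_{nn1}-x_{nn2})^\top(A^\top P+PA)(x_{nn1}-x_{nn2})+2(x_{nn1}-x_{nn2})^\top PW^{k+1}(w_1^k-w_2^k)$.

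The core bookkeeping step is to introduce the reduced stacked vector $\zeta:=[(w_1^0-w_2^0)^\top,\ldots,(w_1^k-w_2^k)^\top]^\top$, partitioned into three blocks of sizes $m_{k+1}$, $a=\sum_{i=1}^{k-1}m_i$, and $m_k$, corresponding respectively to the input layer $w^0=x_{nn}$, the hidden layers $w^1,\ldots,w^{k-1}$, and the output layer $w^k$. I would then verify the two identities $U_1\zeta=[(x_{nn1}-x_{nn2})^\top,(w_1^k-w_2^k)^\top]^\top$ and $U_2\zeta=[(w_1^\phi-w_2^\phi)^\top,(\bar w_1^\phi-\bar w_2^\phi)^\top]^\top$, with $w_j^\phi,\bar w_j^\phi$ as in \eqref{NNVW}. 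The first identity makes the $U_1$-term of \eqref{TH12} equal to exactly $\dot V+\gamma V$ (the lower-right zero block kills the spurious $w^k$ quadratic term). The second identity makes the $U_2$-term of \eqref{TH12} coincide with the left-hand side of \eqref{LEM1}, which is nonnegative by Lemma \ref{lemma1}.

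With both identifications in hand, let $M_1$ and $M_2$ denote the two block matrices appearing in \eqref{TH12}. The inequality \eqref{TH12} states that $U_1^\top M_1 U_1+U_2^\top M_2 U_2\le 0$, so evaluating the associated quadratic form at $\zeta$ yields $(\dot V+\gamma V)+(\text{the Lemma \ref{lemma1} term})\le 0$. Since the Lemma \ref{lemma1} term is nonnegative, this forces $\dot V\le-\gamma V$. Finally, the Comparison Lemma together with the sandwich bound \eqref{TH110} reproduces the estimate $\|x_{nn1}(t)-x_{nn2}(t)\|\le\sqrt{\overline p/\underline p}\,e^{-(\gamma/2)t}\|x_{nn1}(0)-x_{nn2}(0)\|$, establishing Definition \ref{d1} with $c=\sqrt{\overline p/\underline p}$ and $\alpha=\gamma/2$.

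I expect the main obstacle to be the careful verification of the two selection-matrix identities, and in particular the role of the repeated $I_{a\times a}$ block in $U_2$: this duplication is exactly what encodes that each hidden-layer difference $w_1^i-w_2^i$ for $1\le i\le k-1$ appears in both $w^\phi$ and $\bar w^\phi$. One must confirm that the block partition of $\zeta$ is consistent across $U_1$, $U_2$, $M_1$, and $M_2$, so that $PW^{k+1}$ multiplies the $w^k$ block and $Q_1,Q_2$ act on the correctly stacked layers; once this is in place both quadratic forms live on the same vector $\zeta$ and the single inequality \eqref{TH12} controls them simultaneously. The remaining Lyapunov-to-contraction passage is then identical to the single-layer case.
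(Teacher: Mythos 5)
Your proposal is correct and follows essentially the same route as the paper's proof: the paper likewise multiplies \eqref{TH12} on both sides by the stacked difference vector $\bigl[(w^0_{nn1}-w^0_{nn2})^\top,\ (\bar w^\phi_{nn1}-\bar w^\phi_{nn2})^\top\bigr]^\top$ (your $\zeta$), identifies the $U_1$-block with $\dot V+\gamma V$ and the $U_2$-block with the nonnegative quadratic form of Lemma \ref{lemma1}, concludes $\dot V\le-\gamma V$, and finishes with the comparison lemma and \eqref{TH110}. Your explicit verification of the selection-matrix identities (including the role of the repeated $I_{a\times a}$ block in $U_2$) is left implicit in the paper, and you correctly use $W^{k+1}$ in the incremental dynamics where the paper's display \eqref{TH1P2} contains a typo ($W^k$).
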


\begin{proof}
By multiplying the left-hand side of inequality \eqref{TH12} by $\bbm{
w_{nn1}^0-w_{nn2}^0
 \\
\bar{w}_{nn1}^{\phi}-\bar{w}_{nn2}^{\phi}
}^\top$ on the left, $\bbm{
w_{nn1}^0-w_{nn2}^0
 \\
\bar{w}_{nn1}^{\phi}-\bar{w}_{nn2}^{\phi}
}$ on the right, and by using \eqref{TH12}, one has 
\begin{equation}\label{TH1P1}
\begin{aligned}
&\begin{bmatrix}
  x_{nn1}-x_{nn2}\\
w_{nn1}^k -w_{nn2}^k
\end{bmatrix}^\top
\begin{bmatrix}
   A^\top P+PA+ \gamma P &PW^{k+1}\\
   \ast & 0
\end{bmatrix}\times\\&\begin{bmatrix}
   x_{nn1}-x_{nn2}\\
w_{nn1}^k -w_{nn2}^k
\end{bmatrix}+\begin{bmatrix}
w_{nn1}^{\phi}-w_{nn2}^{\phi}
 \\
\bar{w}_{nn1}^{\phi}-\bar{w}_{nn2}^{\phi}
\end{bmatrix}^\top \times\\&\begin{bmatrix}
 -2Q_1^\top Q_2&Q_1^\top+Q_2^\top
 \\
\ast&-2I
\end{bmatrix}\begin{bmatrix}
w_{nn1}^{\phi}-w_{nn2}^{\phi}
 \\
\bar{w}_{nn1}^{\phi}-\bar{w}_{nn2}^{\phi}
\end{bmatrix}\leq 0.
\end{aligned}
\end{equation}
Define the Lyapunov function $V( x_{nn1}-x_{nn2})=\bbm{ x_{nn1}-x_{nn2}}^\top P \bbm{ x_{nn1}-x_{nn2}}$, use \eqref{NN}, \eqref{TH1P1}, \eqref{LEM1}, one has
\begin{equation}\label{TH1P2}
\begin{aligned}
&\dot{V}( x_{nn1}-x_{nn2})\overset{\eqref{NN}}{=}\\&\bbm{ x_{nn1}-x_{nn2}}^\top( A^\top P+PA)\bbm{ x_{nn1}-x_{nn2}}+\\&\bbm{w_{nn1}^k-w_{nn2}^k}^\top  W^{k\top}P\bbm{ x_{nn1}-x_{nn2}}+\\&\bbm{ x_{nn1}-x_{nn2}}^\top PW^k\bbm{w_{nn1}^k-w_{nn2}^k}\\&\overset{\eqref{TH1P1}}{\leq}-\gamma \bbm{ x_{nn1}-x_{nn2}}^\top P \bbm{ x_{nn1}-x_{nn2}}-\\&\begin{bmatrix}
w_{nn1}^{\phi}-w_{nn2}^{\phi}
 \\
\bar{w}_1^{\phi}-\bar{w}_2^{\phi}
\end{bmatrix}^\top \begin{bmatrix}
 -2Q_1^\top Q_2&Q_1^\top+Q_2^\top
 \\
\ast&-2I
\end{bmatrix}\begin{bmatrix}
w_{nn1}^{\phi}-w_{nn2}^{\phi}
 \\
\bar{w}_{nn1}^{\phi}-\bar{w}_{nn2}^{\phi}
\end{bmatrix}\\&\overset{\eqref{LEM1}}{\leq}  -\gamma \bbm{x_{nn1}-x_{nn2}}^\top P \bbm{x_{nn1}-x_{nn2}}\\&=-\gamma V(x_{nn1}-x_{nn2}).
\end{aligned}
\end{equation}
The proof is completed by using the comparison lemma.
 \end{proof}
 
\begin{remark}\label{REM0}
Theorem \ref{theorem1} is designed to impose restrictions specifically on the weights of the NN and the incremental sector bound, without restricting the biases of the NN. This approach has an important implication. While the theorem does not directly address the biases, it does not ignore them. In fact, the biases are implicitly captured within the incrementally sector bound itself. As shown in Lemma \ref{lemma1}, the matrices $Q_1$ and $Q_2$ illustrate that the incremental sector bounds $K_1$ and $K_2$ are intrinsically determined by $W^{i+1}$, $w^i$, and $b^{i+1}$. Consequently, although the theorem imposes explicit conditions only on the weights, the role of biases is still present in the overall structure, ensuring that they are indirectly accounted for within the constraints of the incrementally sector bound.  
\end{remark}

The non-convex contractive condition for NODE is also considered in \cite{martinelli2023unconstrained}. However, the feasibility of the LMI in \cite[Thm.1]{martinelli2023unconstrained} relies on the matrix $A$ being Hurwitz. In Theorem \ref{theorem1}, we overcome this requirement by introducing the term $-2Q_1^\top Q_2$. Notably, in Example \ref{example2}, the matrix $A$ is non-Hurwitz, with eigenvalues $0.0006\pm1.3284j$. The next theorem reformulate the contractive condition \eqref{TH12} into a convex LMI.

\begin{theorem}\label{theorem11}
Suppose there exist a positive definite matrix $P$, positive constants $\underline{p}$, $\overline{p}$, and $\mu$, such that \eqref{TH110} and the following condition
\begin{equation}\label{TH121}
\begin{aligned}
&\bar{U}_1^\top
\bbm{-\mu I&P&0\\
   \ast&  A^\top P+PA &PW^{k+1}\\
  \ast& \ast & 0
}\bar{U}_1+\\&\bar{U}_2^\top\bbm{0&0&0\\\ast&
 -2Q_1^\top Q_2&Q_1^\top+Q_2^\top
 \\\ast&
\ast&-2I
}\bar{U}_2\leq 0, \quad\forall k\geq 2,
\end{aligned}
\end{equation}
hold, where \\$\bar{U}_1=\bbm{I_{m_{k+1}\times m_{k+1}} &0_{m_{k+1}\times m_{k+1}}&0_{m_{k+1}\times a} &0_{m_{k+1}\times m_{k}}\\0_{m_{k+1}\times m_{k+1}}&I_{m_{k+1}\times m_{k+1}}&0_{m_{k+1}\times a}&0_{m_{k+1}\times m_{k}}\\0_{m_{k}\times m_{k+1}}&0_{m_{k}\times m_{k+1}}&0_{m_{k}\times a}&I_{m_{k}\times m_{k}}}$, $\bar{U}_2=\bbm{I_{m_{k+1}\times m_{k+1}} &0_{m_{k+1}\times m_{k+1}}&0_{m_{k+1}\times a}&0_{m_{k+1}\times m_{k}}\\0_{m_{k+1}\times m_{k+1}}&I_{m_{k+1}\times m_{k+1}} &0_{m_{k+1}\times a}&0_{m_{k+1}\times m_{k}}\\0_{a\times m_{k+1}}&0_{a\times m_{k+1}}&I_{a\times a}&0_{a\times m_{k}}\\0_{a\times m_{k+1}}&0_{a\times m_{k+1}}&I_{a\times a}&0_{a\times m_{k}}\\0_{m_{k}\times m_{k+1}}&0_{m_{k}\times m_{k+1}}&0_{m_{k}\times a}&I_{m_{k}\times m_{k}}}$, $a=\sum_{i=1}^{k-1}m_i$, $W^{k+1}$, $Q_1$, $Q_2$ are given in \eqref{NN}, \eqref{LEM1}, respectively. Then the NODE system \eqref{NDL} is contractive in $\mathcal{C}$.
\end{theorem}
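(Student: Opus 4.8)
The plan is to reduce the convex LMI \eqref{TH121} to the non-convex condition \eqref{TH12} of Theorem~\ref{theorem1} by eliminating an auxiliary coordinate through a Schur complement, and then to invoke Theorem~\ref{theorem1} verbatim. The starting point is a structural observation: the matrices $\bar{U}_1,\bar{U}_2$ differ from the selection matrices $U_1,U_2$ of Theorem~\ref{theorem1} only by one extra leading block of size $m_{k+1}$, i.e.\ an auxiliary slack coordinate $\zeta_1\in\mathbb{R}^{m_{k+1}}$. Writing the augmented vector as $\zeta=[\zeta_1^\top,\zeta_2^\top,\zeta_3^\top,\zeta_4^\top]^\top$, with $\zeta_2\in\mathbb{R}^{m_{k+1}}$ the state increment, $\zeta_3\in\mathbb{R}^{a}$, and $\zeta_4\in\mathbb{R}^{m_k}$, one checks that $\bar{U}_1\zeta=[\zeta_1^\top,\zeta_2^\top,\zeta_4^\top]^\top$ and $\bar{U}_2\zeta=[\zeta_1^\top,\zeta_2^\top,\zeta_3^\top,\zeta_3^\top,\zeta_4^\top]^\top$. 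The crucial point is that the leading diagonal block of the second inner matrix in \eqref{TH121} is zero, so $\zeta_1$ does not appear in the $\bar{U}_2$-term at all; it enters the full quadratic form only through the $-\mu I$ and $P$ entries of the first inner matrix, coupling exclusively to $\zeta_2$.

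Collecting the full augmented matrix of \eqref{TH121} in the partition $(\zeta_1;\,\zeta_2,\zeta_3,\zeta_4)$, its $(\zeta_1,\zeta_1)$ block is $-\mu I$, its coupling row is $[\,P\ \ 0\ \ 0\,]$, and its remaining $(\zeta_2,\zeta_3,\zeta_4)$-block coincides with the left-hand side of \eqref{TH12} except that the $(\zeta_2,\zeta_2)$ entry reads $A^\top P+PA$ instead of $A^\top P+PA+\gamma P$. Since $\mu>0$ renders the $(\zeta_1,\zeta_1)$ block strictly negative definite, I would apply the Schur complement with respect to this block. This is the key step: it removes $\zeta_1$ and, because $P$ is symmetric, adds exactly $\tfrac{1}{\mu}P^2$ to the $(\zeta_2,\zeta_2)$ entry while leaving the $Q_1,Q_2,W^{k+1}$ data untouched. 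Thus \eqref{TH121} becomes equivalent to
\[
U_1^\top\bbm{A^\top P+PA+\tfrac{1}{\mu}P^2 & PW^{k+1}\\\ast & 0}U_1+U_2^\top\bbm{-2Q_1^\top Q_2 & Q_1^\top+Q_2^\top\\\ast & -2I}U_2\leq 0.
\]

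It then remains to replace $\tfrac{1}{\mu}P^2$ by $\gamma P$. Choosing $\gamma=\underline{p}/\mu$ and using the lower bound $P\geq\underline{p}I$ from \eqref{TH110}, I would establish $\gamma P=\tfrac{\underline{p}}{\mu}P\leq\tfrac{1}{\mu}P^2$, which is the elementary spectral inequality $\underline{p}P\leq P^2$ valid for any symmetric $P\geq\underline{p}I$. Because $U_1^\top(\cdot)U_1$ is monotone under the positive-semidefinite order, substituting $\gamma P$ for $\tfrac{1}{\mu}P^2$ only lowers the left-hand side of the displayed inequality, so the display implies \eqref{TH12}. With \eqref{TH110} and \eqref{TH12} now in force, Theorem~\ref{theorem1} delivers contractivity of the NODE system \eqref{NDL} in $\mathcal{C}$, completing the argument.

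The step I expect to be the main obstacle is the bookkeeping of the Schur complement inside the congruence-transformed sum $\bar{U}_1^\top(\cdot)\bar{U}_1+\bar{U}_2^\top(\cdot)\bar{U}_2$: one must verify carefully that the slack block genuinely decouples from the $\bar{U}_2$-term (thanks to its zero leading block) and couples to $\zeta_2$ with precisely the factor $P$, so that the reduced matrix reproduces the block pattern of \eqref{TH12} exactly up to the single swap $\gamma P\leftrightarrow\tfrac{1}{\mu}P^2$, with $U_1,U_2$ appearing as the exact ``de-augmented'' versions of $\bar{U}_1,\bar{U}_2$. Once this identification is confirmed, the remaining monotonicity argument is routine.
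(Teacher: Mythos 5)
Your proposal is correct and follows essentially the same route as the paper's own proof: identify $\bar{U}_1,\bar{U}_2$ as the augmented versions of $U_1,U_2$, take the Schur complement with respect to the $-\mu I$ slack block to produce the term $\tfrac{1}{\mu}P^2$, and then use $\gamma=\underline{p}/\mu$ together with $\underline{p}P\leq P^2$ (from \eqref{TH110}) and congruence monotonicity to recover \eqref{TH12} and invoke Theorem \ref{theorem1}. The only difference is presentational: the paper carries out the block bookkeeping by explicit matrix multiplications in \eqref{TH121P1}--\eqref{THp121P5}, whereas you argue coordinate-wise via the slack variable $\zeta_1$; the content is identical.
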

\begin{proof}
 Since $\bar{U}_1=\bbm{I_{m_{k+1}\times m_{k+1}}&0\\0&U_1}$, $\bar{U}_2=\bbm{I_{m_{k+1}\times m_{k+1}}&0\\0&U_2}$, we have
 \begin{equation}\label{TH121P1}
\begin{aligned}
&\bar{U}_1^\top
\bbm{-\mu I&P&0\\
   \ast&  A^\top P+PA &PW^{k+1}\\
  \ast& \ast & 0
}\bar{U}_1\\&=\bbm{I&0\\0&U^\top_1}\bbm{-\mu I&P&0\\
   \ast&  A^\top P+PA &PW^{k+1}\\
  \ast& \ast & 0
}\bbm{I&0\\0&U_1}\\&=
\bbm{-\mu I&\bbm{P&0}\\U_1^\top \bbm{P\\0}&U_1^\top\bbm{A^\top P+PA &PW^{k+1}\\
 \ast& 0}}\bbm{I_{n\times n}&0\\0&U_1}\\&=\bbm{-\mu I&\bbm{P&0}U_1\\\ast&U_1^\top\bbm{A^\top P+PA &PW^{k+1}\\
  \ast & 0}U_1}.
\end{aligned}
\end{equation}
Similarly, we have
\begin{equation}\label{TH121P2}
\begin{aligned}
&\bar{U}_2^\top\bbm{0&0&0\\\ast&
 -2Q_1^\top Q_2&Q_1^\top+Q_2^\top
 \\\ast&
\ast&-2I
}\bar{U}_2\\&=\bbm{0&0\\ 0&U_2^\top\bbm{-2Q_1^\top Q_2&Q_1^\top+Q_2^\top
 \\
\ast&-2I}U_2}.
\end{aligned}
\end{equation}
Then inequality \eqref{TH121} can be written as
\begin{equation}\label{TH121P3}
\begin{aligned}
&\bbm{-\mu I&\bbm{P&0}U_1\\\ast&\bar{L}}\leq 0,
\end{aligned}
\end{equation}
where $\bar{L}=U_1^\top
\begin{bmatrix}
     A^\top P+PA &PW^{k+1}\\
   \ast& 0
\end{bmatrix}U_1+U_2^\top\begin{bmatrix}
 -2Q_1^\top Q_2&Q_1^\top+Q_2^\top
 \\
\ast&-2I
\end{bmatrix}U_2$. By applying the Schur complement to inequality \eqref{TH121P3}, one has
\begin{equation}\label{TH121P4}
\begin{aligned}
&U_1^\top\bbm{\frac{1}{\mu}P^2&0\\0&0}U_1+\bar{L}=\\&U_1^\top
\begin{bmatrix}
     A^\top P+PA+\frac{1}{\mu}P^2 &PW^{k+1}\\
   \ast & 0
\end{bmatrix}U_1+\\&U_2^\top\begin{bmatrix}
 -2Q_1^\top Q_2&Q_1^\top+Q_2^\top
 \\
\ast&-2I
\end{bmatrix}U_2\leq 0.
\end{aligned}
\end{equation}
By choosing $\gamma=\frac{\underline{p}}{\mu}$, we have
\begin{equation}\label{THp121P5}
\begin{aligned}
&U_1^\top\bbm{\gamma P&0\\0&0}U_1=\bbm{\gamma P&0&0\\\ast&0&0\\\ast&\ast&0}\leq \bbm{\frac{1}{\mu}P^2&0&0\\\ast&0&0\\\ast&\ast&0}\\&=U_1^\top\bbm{\frac{1}{\mu}P^2&0\\0&0}U_1.
\end{aligned}
\end{equation}
Then inequality \eqref{TH12} is satisfied by applying inequality \eqref{TH121P4}, and \eqref{THp121P5}. The conditions in Theorem \ref{theorem1} are met, and the
conclusion of the theorem follows directly. 
\end{proof}

The next proposition provides bounds on the distance between trajectory of the autonomous system
\begin{equation}\label{ND}
\dot{x}=Ax+Z_{nn}(x)+\varepsilon(x)=:f(x), 
\end{equation}
and the trajectory of the contractive NODE system \eqref{NDL} in $\mathcal{C}$.
\begin{proposition}\label{proposition2}
Consider the NODE system \eqref{NDL} with initial condition $x_{nn}(0)\in \mathcal{C}$, and assume that conditions \eqref{TH110}, \eqref{TH12} are satisfied. Likewise, consider system \eqref{ND} with initial condition $x(0)\in\mathcal{C}$, and let $T$ be such that $x(t)\in \mathcal{C}$ for all $t\in [0,\ T)$. Then for any $t\in [0,\ T)$, the trajectory of the NODE system \eqref{NDL} and the trajectory of system \eqref{ND} satisfy the following inequality:
 \begin{equation}\label{PR21}
 \begin{aligned}
&\|x(t)-x_{nn}(t)\|\leq \\&\sqrt{\frac{\overline{p}}{\underline{p}}}\|x(0)-x_{nn}(0)\|e^{-\frac{\gamma }{2}t}+\frac{2\epsilon}{\gamma}\sqrt{\frac{\overline{p}}{\underline{p}}}(1-e^{-\frac{\gamma }{2}t}). 
\end{aligned}
\end{equation}
Here, $\overline{p}$, $\underline{p}$, $\gamma$ are defined in \eqref{TH110}, \eqref{TH12}, $\epsilon=\sup \|\varepsilon(x)\|$.
\end{proposition}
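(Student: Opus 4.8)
The plan is to run an incremental Lyapunov argument for the error $e(t):=x(t)-x_{nn}(t)$, reusing the contractivity estimate already established in the proof of Theorem \ref{theorem1} and treating the approximation error $\varepsilon(x)$ as an additive perturbation. I would take the same weighted incremental Lyapunov function $V(e)=e^\top P e$, with $P$ satisfying \eqref{TH110} and \eqref{TH12}. Since $x_{nn}(0)\in\mathcal{C}$, forward invariance keeps $x_{nn}(t)\in\mathcal{C}\subseteq\mathcal{D}$ for all $t\geq 0$, and by hypothesis $x(t)\in\mathcal{C}$ for $t\in[0,\,T)$; hence both arguments lie in $\mathcal{D}$ and the sector-bound inequality of Lemma \ref{lemma1} applies along the two trajectories on $[0,\,T)$.

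Next I differentiate $V$ along \eqref{ND} and \eqref{NDL}. Writing $\dot e = A e + \bigl(Z_{nn}(x)-Z_{nn}(x_{nn})\bigr)+\varepsilon(x)$, the terms involving $A$ and $Z_{nn}$ reproduce exactly the quadratic form handled in \eqref{TH1P2}, so that portion is bounded by $-\gamma V(e)$ using \eqref{TH12} together with Lemma \ref{lemma1}. The only extra contribution is the cross term $2e^\top P\varepsilon(x)$, giving $\dot V(e)\leq -\gamma V(e)+2e^\top P\varepsilon(x)$. To control the perturbation term I factor $P=P^{1/2}P^{1/2}$ and apply Cauchy--Schwarz, $2e^\top P\varepsilon(x)\leq 2\|P^{1/2}e\|\,\|P^{1/2}\varepsilon(x)\|$. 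Using \eqref{TH110} one has $\|P^{1/2}e\|=\sqrt{V(e)}$ and $\|P^{1/2}\varepsilon(x)\|\leq\sqrt{\overline p}\,\|\varepsilon(x)\|\leq\sqrt{\overline p}\,\epsilon$, so that $\dot V(e)\leq -\gamma V(e)+2\sqrt{\overline p}\,\epsilon\sqrt{V(e)}$.

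The decisive step is to linearize this differential inequality by the change of variable $W:=\sqrt{V(e)}$, for which $\dot W=\dot V/(2\sqrt{V})$ yields $\dot W\leq -\tfrac{\gamma}{2}W+\sqrt{\overline p}\,\epsilon$. The comparison lemma then gives $W(t)\leq W(0)e^{-\frac{\gamma}{2}t}+\frac{2\sqrt{\overline p}\,\epsilon}{\gamma}\bigl(1-e^{-\frac{\gamma}{2}t}\bigr)$. Converting back through \eqref{TH110}, namely $\sqrt{\underline p}\,\|e\|\leq W\leq\sqrt{\overline p}\,\|e\|$, and dividing by $\sqrt{\underline p}$ produces precisely \eqref{PR21}. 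I expect the main obstacle to be twofold: first, justifying carefully that the sector-bound inequality of Lemma \ref{lemma1} is legitimate here even though $x(t)$ and $x_{nn}(t)$ solve \emph{different} systems --- this is exactly where the forward-invariance assumption and the restriction $t\in[0,\,T)$ enter, ensuring both states remain in $\mathcal{D}$; and second, handling the square-root nonlinearity $\sqrt{V}$, where the $W=\sqrt V$ substitution is what turns a nonlinear comparison ODE into a solvable linear one (one must also address differentiability of $W$ where $V=0$, which is harmless since there the bound holds trivially).
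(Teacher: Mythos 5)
Your proposal is correct and follows essentially the same route as the paper's proof: the same incremental Lyapunov function $V=e^\top Pe$, the same bound $\dot V\leq-\gamma V+2\epsilon\sqrt{\overline p}\sqrt{V}$ obtained by reusing the quadratic-form estimate from Theorem \ref{theorem1} plus Cauchy--Schwarz on the perturbation term, and the same comparison-lemma step on $\sqrt{V}$ followed by conversion via \eqref{TH110}. Your additional remarks (explicit $W=\sqrt V$ substitution, invariance justification for applying Lemma \ref{lemma1}, and the $V=0$ differentiability caveat) only make explicit what the paper leaves implicit.
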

\begin{proof}
Define the Lyapunov function $V(x-x_{nn})=\bbm{x-x_{nn}}^\top P \bbm{x-x_{nn}}$, $P$ is the same as in Theorem \ref{theorem1}. The derivative of $V(x-x_{nn})$ along the 
trajectories of system \eqref{NDL} and \eqref{ND} satisfies
 \begin{equation}\label{PR2p1}
 \begin{aligned}
&\dot{V}(x-x_{nn})=\bbm{x-x_{nn}}^\top( A^\top P+PA)\bbm{x-x_{nn}}+\\&\bbm{w^k-w_{nn}^k}^\top W^{(k+1)\top}P\bbm{x-x_{nn}}+\\&\bbm{x-x_{nn}}^\top PW^{k+1}\bbm{w^k-w_{nn}^k}+2\varepsilon(x)^\top P\bbm{x-x_{nn}}\\&\overset{\eqref{TH1P2}}{\leq}-\gamma V(x-x_{nn})+2\|\varepsilon(x)\| \|P\bbm{x-x_{nn}}\|\\&\overset{\eqref{TH110}}{\leq}-\gamma V(x-x_{nn})+2\epsilon\sqrt{\overline{p}}\sqrt{V(x-x_{nn})},
\end{aligned}
\end{equation}
By the comparison lemma, $V(x-x_{nn})$ satisfies the inequality 
 \begin{equation}\label{PR2p2}
 \begin{aligned}
\sqrt{V(x(t)-x_{nn}(t))}\leq e^{-\frac{\gamma }{2}t}\sqrt{V_0}+\frac{2\epsilon\sqrt{\overline{p}}}{\gamma}(1-e^{-\frac{\gamma }{2}t}).
\end{aligned}
\end{equation}
By using \eqref{TH110}, we arrive at 
\begin{equation}\label{PR2p3}
 \begin{aligned}
&\sqrt{\underline{p}}\|x(t)-x_{nn}(t)\|\\&\leq e^{-\frac{\gamma }{2}t}\sqrt{\overline{p}}\|x(0)-x_{nn}(0)\|+\frac{2\epsilon\sqrt{\overline{p}}}{\gamma}(1-e^{-\frac{\gamma }{2}t}).
\end{aligned}
\end{equation}
This gives the desired result.
\end{proof}


For system \eqref{ND}, assume that there exists a subset $\mathcal{B}\subseteq\mathcal{C}$ such that, for any initial condition $x(0)\in \mathcal{B}$, the trajectory $x(t)$ remains within $\mathcal{C}$ for all $t\geq 0$. A similar assumption can also be found in \cite{strasser2024koopman,van2023reprojection}. Moreover, suppose there exists an unknown equilibrium point $x^*\in \mathcal{C}$, i.e., there exists an unknown $x^*\in \mathcal{C}$ such that $Ax^*+Z_{nn}(x^*)+\varepsilon(x^*)=0$. The following proposition establishes that if the NODE system \eqref{NDL} is contractive, the nonlinear system \eqref{ND} converges to a neighborhood of its equilibrium point, with the size of the neighborhood determined by the approximation error $\varepsilon(x)$.
\begin{proposition}\label{proposition3}
Consider the nonlinear system \eqref{ND} with the initial condition $x(0)\in \mathcal{B}$. If the NODE system \eqref{NDL} satisfy conditions \eqref{TH110}, \eqref{TH12}, then the trajectory of system \eqref{ND} satisfies the following inequality:
 \begin{equation}\label{pro3}
 \begin{aligned}
&\|x(t)-x^*\|\leq \sqrt{\frac{\overline{p}}{\underline{p}}}\|x(0)-x^*\|e^{-\frac{\gamma }{2}t}+\frac{4\epsilon}{\gamma}\sqrt{\frac{\overline{p}}{\underline{p}}}(1-e^{-\frac{\gamma }{2}t}), 
\end{aligned}
\end{equation} 
for all $t\geq0$, where $x^*$ is the unknown equilibrium of system \eqref{ND} in $\mathcal{C}$.
\end{proposition}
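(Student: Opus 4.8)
The plan is to reduce the whole statement to two applications of Proposition~\ref{proposition2} together with the triangle inequality, exploiting the fact that the unknown equilibrium $x^*$ of \eqref{ND} is itself a constant solution of \eqref{ND}. The idea is to interpose a single auxiliary trajectory of the NODE system that is launched from $x^*$, so that the transient (initial-condition) error and the steady-state (approximation) error can be absorbed separately.

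First I would introduce the auxiliary trajectory: let $x_{nn}(t)$ denote the solution of the contractive NODE system \eqref{NDL} with initial condition $x_{nn}(0)=x^*$. Since $x^*\in\mathcal{C}$, forward invariance of $\mathcal{C}$ guarantees $x_{nn}(t)\in\mathcal{C}$ for all $t\geq 0$, so Proposition~\ref{proposition2} applies along this trajectory with $T=\infty$. Because $x(0)\in\mathcal{B}$, the actual trajectory $x(t)$ of \eqref{ND} also remains in $\mathcal{C}$ for all $t\geq 0$, so applying Proposition~\ref{proposition2} to $x(t)$ and $x_{nn}(t)$, and using $\|x(0)-x_{nn}(0)\|=\|x(0)-x^*\|$, yields
\[
\|x(t)-x_{nn}(t)\|\leq \sqrt{\tfrac{\overline{p}}{\underline{p}}}\,\|x(0)-x^*\|\,e^{-\frac{\gamma}{2}t}+\frac{2\epsilon}{\gamma}\sqrt{\tfrac{\overline{p}}{\underline{p}}}\bigl(1-e^{-\frac{\gamma}{2}t}\bigr).
\]

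The key step is to bound $\|x_{nn}(t)-x^*\|$. Here I would observe that, since $x^*$ is an equilibrium of \eqref{ND}, the constant map $t\mapsto x^*$ is a genuine solution of \eqref{ND} that trivially stays in $\mathcal{C}$. Applying Proposition~\ref{proposition2} a second time — now to the constant solution $x^*$ of \eqref{ND} and to the same NODE trajectory $x_{nn}(t)$, both emanating from $x^*$ — the exponential term drops out because the initial mismatch is zero, leaving
\[
\|x_{nn}(t)-x^*\|\leq \frac{2\epsilon}{\gamma}\sqrt{\tfrac{\overline{p}}{\underline{p}}}\bigl(1-e^{-\frac{\gamma}{2}t}\bigr).
\]

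Finally I would combine the two estimates through $\|x(t)-x^*\|\leq\|x(t)-x_{nn}(t)\|+\|x_{nn}(t)-x^*\|$; the two residual terms $\tfrac{2\epsilon}{\gamma}\sqrt{\overline{p}/\underline{p}}\,(1-e^{-\gamma t/2})$ add to the claimed $\tfrac{4\epsilon}{\gamma}\sqrt{\overline{p}/\underline{p}}\,(1-e^{-\gamma t/2})$, which establishes \eqref{pro3} for all $t\geq 0$. I expect the main conceptual obstacle to be identifying the correct comparison trajectory — namely the NODE solution started from the unknown equilibrium $x^*$ — and recognizing that $x^*$ doubles as a constant solution of \eqref{ND}, so that Proposition~\ref{proposition2} can legitimately be invoked twice; once this is in place, the remaining work is the routine arithmetic of summing the two bounds.
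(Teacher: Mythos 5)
Your proof is correct, but it takes a genuinely different route from the paper's. The paper does not interpose the NODE trajectory started at $x^*$; instead it introduces a perturbed NODE system $\dot{x}_{nn}^{\varepsilon}=Ax_{nn}^{\varepsilon}+Z_{nn}(x_{nn}^{\varepsilon})+\varepsilon(x^*)$, for which $x^*$ is an exact equilibrium (hence a trajectory), notes that adding this constant input does not affect contractivity, and then reruns the Lyapunov/comparison computation of Proposition \ref{proposition2} directly on $V(x-x^*)$: the perturbation term there is $2\bigl(\varepsilon(x)-\varepsilon(x^*)\bigr)^\top P\bigl(x-x^*\bigr)$, bounded via $\|\varepsilon(x)-\varepsilon(x^*)\|\leq 2\epsilon$, which is where the factor $4$ in \eqref{pro3} arises. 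You instead treat Proposition \ref{proposition2} as a black box and invoke it twice --- once for the pair $\bigl(x(t),x_{nn}(t)\bigr)$ and once for the pair $\bigl(x^*,x_{nn}(t)\bigr)$, the latter being legitimate because the constant map $t\mapsto x^*$ is a solution of \eqref{ND} that stays in $\mathcal{C}$ --- and recover the factor $4$ as $2\epsilon+2\epsilon$ through the triangle inequality. Both arguments rest on the same standing assumptions (forward invariance of $\mathcal{C}$ under \eqref{NDL}, $x(t)\in\mathcal{C}$ for $x(0)\in\mathcal{B}$, and $x^*\in\mathcal{C}$) and produce identical constants. What your route buys is modularity: no new Lyapunov derivation and no auxiliary perturbed system are needed, and the provenance of the constant $4$ is transparent. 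What the paper's route buys is a single self-contained differential inequality and the explicit observation that contractivity of the shifted system is independent of $\varepsilon(x^*)$, so no knowledge of the unknown equilibrium is required; your proof shares that feature, since the auxiliary trajectory launched from $x^*$ is purely a proof device and never needs to be computed.
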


\begin{proof}
Consider the NODE system \eqref{NDL} with an external input $\varepsilon(x^*)$, which is given by:
 \begin{equation}\label{NDLS}
\dot{x}_{nn}^{\varepsilon}(t)=Ax_{nn}^{\varepsilon}+Z_{nn}(x_{nn}^{\varepsilon})+\varepsilon(x^*).
\end{equation}
Here, $x^*$ represents the equilibrium of the system \eqref{ND}, satisfying $0=Ax^*+Z_{nn}(x^*)+\varepsilon(x^*)$. Consequently, $x^*$ also serves as a trajectory for \eqref{NDLS}. Given that the conditions \eqref{TH110}, \eqref{TH12} are fulfilled, the NODE system \eqref{NDLS} is contractive by Theorem \ref{theorem1}. Following the proof of Proposition \ref{proposition2}, replacing $x_{nn}$ with $x^*$, we define the Lyapunov function as $V(x-x^*)=\bbm{x-x^*}^\top P \bbm{x-x^*}$, $P$ is the same as in Proposition \ref{proposition2}. Taking the derivative of $V(x-x^*)$ along the trajectory of the systems \eqref{ND}, \eqref{NDLS}, and applying \eqref{TH1P2}, we have:
\begin{equation}\label{TH2P3}
\begin{aligned}
&\dot{V}(x-x^*)
=\bbm{x-x^*}^\top( A^\top P+PA)\bbm{x-x^*}+\\&\bbm{w_1^k-w_{2}^k}^\top W^{(k+1)\top}P\bbm{x-x^*}+\\&\bbm{x-x^*}^\top PW^{k+1}\bbm{w^k-w_{nn}^k}+\\&2\bbm{\varepsilon(x)-\varepsilon(x^*)}^\top P\bbm{x-x_{nn}^{\varepsilon}}\\&\overset{\eqref{TH1P2}}{\leq}-\gamma \bbm{x-x^*}^\top P \bbm{x-x^*}+\\&2\|x-x^*\|\|P\|\|\varepsilon(x)-\varepsilon(x^*)\|\\&\overset{\eqref{TH110}}{\leq}-\gamma V(x-x^*)+4\epsilon\sqrt{\overline{p}}\sqrt{V(x-x^*)}.
\end{aligned}
\end{equation}
The rest of the proof follows \eqref{PR2p1}-\eqref{PR2p3}. 
\end{proof}

In the proof of Proposition \ref{proposition3}, we only use the contractivity of \eqref{NDLS}, which is independent of $\varepsilon(x^*)$. Consequently, knowing the exact value of $x^*$ is not required. From Proposition \ref{proposition3}, it follows that for a given approximation error $\varepsilon(x)$, the trajectory of system \eqref{ND} asymptotically converges to a neighborhood of the unknown equilibrium point $x^*$ as $t\to \infty$. Specifically, the state $x(t)$ lies within the ball $\|x-x^*\|\leq \frac{4\epsilon}{\gamma}\sqrt{\frac{\overline{p}}{\underline{p}}}$, whose size is determined by the approximation error.

\subsection{Neural network based controller design}
In Subsection \ref{CNODE}, we established the contractivity of the NODE system \eqref{NDL}. Here, we consider the scenario where the NODE system \eqref{NDL} is not contractive. Our goal is to design an NN-based controller that enforces contractivity in the NODE system \eqref{NDL}, utilizing the results presented in Subsection \ref{CNODE}. By applying Proposition \ref{proposition3}, the proposed controller guarantees that the trajectories of the original system \eqref{ND0} converge to a neighborhood of the unknown equilibrium point $x^*$.

Consider the NODE affine system given by the following form
\begin{equation}\label{NNC}
  \dot{x}=Ax+Z_{nn}(x)+gu(x),
\end{equation}
where the system is defined in $\mathcal{C}$, $g$ is a known constant matrix, and $Z_{nn}(x)$ is defined in \eqref{NN}. The controller $u(x)$ consists of a linear state feedback controller $Hx$ and an NN controller $u_{nn}(x)$, i.e., 
\begin{equation}\label{NNC10}
u(x)=Hx+u_{nn}(x),
\end{equation}
where $H$ represents the control gain and $u_{nn}(x)$ is defined by \eqref{NN} with 
\begin{equation}\label{NNC1}
u_{nn}(x)=W_u^{k+1}w^k+b_u^{k+1}. 
\end{equation}
This indicates that the NN controller $u_{nn}(x)$ shares the same structure as $Z_{nn}(x)$, differing only in the outer weights and biases. The controller design follows the ``Nonlinearity Cancellation" approach \cite{de2023learning}, where $Z_{nn}(x)$ can be considered as the system's basis functions. Since $Z_{nn}(x)$ is bounded both above and below, its effect on the system can be effectively suppressed by the control term $u_{nn}(x)$, thereby enabling the linear component $Ax$ to dominate the system dynamics. The additional state feedback $Hx$ is then introduced to guarantee the contractive property of the overall system.
By using $u(x)=Hx+u_{nn}(x)$, the affine system \eqref{ND0} can be rewritten as  
\begin{equation}\label{NNC2}
  \dot{x}=(A+gH)x+(W^{k+1}+gW_u^{k+1})w^k+b^{k+1}+gb_u^{k+1}+\varepsilon(x).
\end{equation}
Consider the NODE affine system
\begin{equation}\label{NNC3}
  \dot{x}=(A+gH)x+(W^{k+1}+gW_u^{k+1})w^k+b^{k+1}+gb_u^{k+1}.
\end{equation}
Let 
\begin{equation}\label{NNC4}
W_{u\min}^{k+1}=\underset{W_u^{k+1}}{\text{argmin}} \|W^{k+1}+gW_u^{k+1}\|,
\end{equation} 
and define 
\begin{equation}\label{NNC5}
W_{\min}=W^{k+1}+gW_{u\min}^{k+1}. 
\end{equation}
Then, our controller \eqref{NNC10} is in the following form
\begin{equation}\label{NNC11}
u(x)=Hx+W_{u\min}^{k+1}w^k+b_u^{k+1}.
\end{equation}
\subsubsection{‌‌Hopfield (Single-layer) neural network case} 
For single-hidden layer NN, in the light of Lemma \ref{CO0}, we can derive the following result.
\begin{lemma}\label{corollary2}
Suppose there exist a positive definite matrix $P$, positive constants $\underline{p}$, $\overline{p}$, $\gamma$, and a matrix $H$, such that \eqref{TH110} and the following condition 
\begin{equation}\label{CO22}
\begin{aligned}
&\begin{bmatrix}
L'&PW_{\min}+(\alpha^1+\beta^1)W^{1\top}\\
\ast & -2I
\end{bmatrix}\leq 0,
\end{aligned}
\end{equation}
hold, where $W_{\min}=W^2+gW_{u\min}^{2}$, $L'=(A+gH)^\top P+P(A+gH)+\gamma P-2\alpha^1\beta^1W^{1\top}W^1$, and $W^1$, $W^2$, $\alpha^1$, $\beta^1$ are given in \eqref{NN}, \eqref{ISB}, respectively. Then, the learned affine system \eqref{NNC3} is contractive in $\mathcal{C}$ under the controller specified by $u(x)=Hx+W_{u\min}^{2}w^1+b_u^{2}$ with $W_{u\min}^{2}=\underset{W_u^{2}}{\text{argmin}} \|W^{2}+gW_u^{2}\|$.
\end{lemma}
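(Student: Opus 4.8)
The plan is to recognize that Lemma \ref{corollary2} follows as a direct application of Theorem \ref{CO0} to the closed-loop system. Substituting the controller $u(x)=Hx+W_{u\min}^{2}w^1+b_u^{2}$ into the single-layer ($k=1$) form of \eqref{NNC3} gives
\begin{equation*}
\dot{x}=(A+gH)x+W_{\min}w^1+(b^2+gb_u^2),
\end{equation*}
which has exactly the structure of a single-layer NODE system \eqref{NDL}, but with $A$ replaced by $A+gH$ and the output weight $W^2$ replaced by $W_{\min}=W^2+gW_{u\min}^{2}$.

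The central observation is that the controller modifies only the linear term and the outer weights and biases, whereas the hidden layer $w^1=\phi^1(W^1x+b^1)$ retains the original weight $W^1$, bias $b^1$, and incremental sector bounds $\alpha^1,\beta^1$. Consequently the vector incremental sector bound of Fact \ref{fact1}, and in particular the inequality \eqref{LEM10}, continues to hold verbatim for any pair of closed-loop trajectories.

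First I would verify that condition \eqref{CO22} coincides with the contractive condition \eqref{TH120} of Theorem \ref{CO0} under the substitutions $A\mapsto A+gH$ and $W^2\mapsto W_{\min}$: the $(1,1)$ block becomes $L'=(A+gH)^\top P+P(A+gH)+\gamma P-2\alpha^1\beta^1 W^{1\top}W^1$ and the off-diagonal block becomes $PW_{\min}+(\alpha^1+\beta^1)W^{1\top}$, matching \eqref{TH120} term by term. I would then replay the Lyapunov argument of Theorem \ref{CO0} with $V(x_1-x_2)=\bbm{x_1-x_2}^\top P\bbm{x_1-x_2}$ evaluated along two closed-loop trajectories: combining the derivative of $V$ with \eqref{CO22} and \eqref{LEM10} yields $\dot{V}\leq-\gamma V$, and the comparison lemma together with \eqref{TH110} delivers the contraction estimate \eqref{eq:d1} with $c=\sqrt{\overline{p}/\underline{p}}$ and $\alpha=\gamma/2$.

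The step requiring the most care---though not a deep obstacle---is confirming that the incremental sector bound is left intact by the control. Because $W^1$ and $b^1$ are untouched, the bounds $\alpha^1,\beta^1$ certified on $\mathcal{C}$ remain valid and \eqref{LEM10} transfers directly; the argument would break only if the control modified the hidden-layer preactivation $W^1x+b^1$. Finally, I would note that the minimizing choice $W_{u\min}^{2}$ from \eqref{NNC4} is not essential for contractivity itself---any $W_u^2$ producing a $W_{\min}$ that satisfies \eqref{CO22} would suffice; the minimization instead serves to tighten the subsequent disturbance bound obtained via Proposition \ref{proposition3}.
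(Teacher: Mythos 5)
Your proposal is correct and matches the paper's intent exactly: the paper provides no separate proof for this lemma, stating only that it follows ``in the light of'' Theorem \ref{CO0}, which is precisely your substitution $A\mapsto A+gH$, $W^2\mapsto W_{\min}$ with the sector bound \eqref{LEM10} unchanged because $W^1$ and $b^1$ are untouched by the control. Your additional observations---that any $W_u^2$ satisfying \eqref{CO22} would suffice for contractivity, and that the bias term does not affect the argument---are consistent with the paper's own remarks (Remark \ref{rem3} and Remark \ref{REM4}).
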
 

\begin{remark}\label{rem3}
According to Proposition \ref{proposition3} and Lemma \ref{corollary2}, the controller \eqref{NNC11} ensures that the trajectories of the nonlinear affine system \eqref{ND0} converge to a neighborhood of the equilibrium point $x^*$ which is characterized by $\|x-x^*\|\leq \frac{4\epsilon}{\gamma}\sqrt{\frac{\overline{p}}{\underline{p}}}$. Notably, reducing the size of the neighborhood improves the system's convergence property. As a result, the controller design problem can be reformulated as the following optimization problem:
\begin{align}
\underset{P,H,\gamma}{\text{minimize}}  \quad & \frac{1}{\gamma}\sqrt{\frac{\overline{p}}{\underline{p}}}\nonumber\\
\text{s.t.} \quad & \eqref{TH110}, \eqref{CO22}.\label{OPr}
\end{align}
\end{remark}
We will develop a revised version of \eqref{OPr} that is better suited for implementation, as discussed in Remark \ref{rem4}.
\begin{remark}\label{REM4}
Unlike conventional stabilization methods for nonlinear systems, which typically require knowledge of the equilibrium point to design a stabilizing controller, contractive systems offer a unique advantage: they guarantee stability without prior knowledge of the equilibrium. This property is especially valuable in cases where identifying the equilibrium is difficult or when the equilibrium varies. It is important to highlight in Lemma \ref{corollary2} that $b_u^{2}$ can be chosen freely. This flexibility arises because $b_u^{2}$ is independent of the state variable $x$, meaning it does not affect the contractivity of system \eqref{NNC3}. By appropriately selecting $b_u^{2}$, the system can be steered toward any desired equilibrium, a concept frequently found in game theory \cite{alpcan2009control}, and safety control problems \cite{reis2020control}. For instance, setting $b_u^{2} = -(W^2 + gW_u^{2})w^1|_{x=0} - b^2=\phi^1(b^1)-b^2$ shifts the system's equilibrium to zero (see \eqref{NN}). This approach increases the flexibility of system control and ensures stability, even when the equilibrium is not explicitly known.
\end{remark}

Similar to Theorem \ref{the11}, the non-convex condition \eqref{CO22} can be reformulated as a convex LMI. This leads to the following theorem.
\begin{theorem}\label{the21}
Suppose there exist a positive definite matrix $S$, positive constants $\mu$, $\underline{s}$, $\overline{s}$, and a matrix $Y$, such that the following conditions hold:
\begin{equation}\label{CO31}
 \underline{s}I\leq S\leq \overline{s}I, 
\end{equation}
\begin{equation}\label{CO32}
\begin{aligned}
&\begin{bmatrix}
-\mu I&S &0\\\ast&\bar{R}&\tilde{R}\\
\ast&\ast& -2I
\end{bmatrix}\leq 0,
\end{aligned}
\end{equation}
where $\bar{R}=AS+SA^\top+gY+Y^\top g^\top-\frac{2\alpha^1\beta^1}{(\beta^1-\alpha^1)^2}W_{\min}W_{\min}^{\top}$, $\tilde{R}=\frac{\alpha^1+\beta^1}{\beta^1-\alpha^1}W_{\min}+(\beta^1-\alpha^1)SW^{1\top}$, $W_{\min}=W^2+gW_{u\min}^{2}$. Then, the learned affine system \eqref{NNC3} is contractive in $\mathcal{C}$ under the controller specified by $u(x)=Hx+W_{u\min}^{2}w^1+b_u^{2}$, with $W_{u\min}^{2}=\underset{W_u^{2}}{\text{argmin}} \|W^{2}+gW_u^{2}\|$, $H=YS^{-1}$.
\end{theorem}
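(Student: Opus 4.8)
The plan is to show that the convex LMI \eqref{CO32} is nothing but the non-convex condition \eqref{CO22} of Lemma \ref{corollary2} rewritten through a congruence transformation together with two Schur complements, so that the statement reduces directly to that lemma. First I would introduce the identifications $P=S^{-1}$, $H=YS^{-1}$ (equivalently $Y=HS$), $\underline{p}=1/\overline{s}$, $\overline{p}=1/\underline{s}$, and $\gamma=\underline{s}/\mu$. With these, \eqref{CO31} is immediately equivalent to \eqref{TH110}, and it remains to recover \eqref{CO22}.

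Next I would eliminate the two definite blocks of \eqref{CO32} by Schur complement. Eliminating the $-2I$ block and then the $-\mu I$ block yields the single inequality
\[
\bar{R}+\tfrac{1}{2}\tilde{R}\tilde{R}^\top+\tfrac{1}{\mu}S^2\leq 0 .
\]
Since $S\geq \underline{s}I$ and $\gamma=\underline{s}/\mu$, one has $\tfrac{1}{\mu}S^2\geq \gamma S$, so replacing $\tfrac{1}{\mu}S^2$ by the smaller term $\gamma S$ preserves the inequality and gives
\[
AS+SA^\top+gY+Y^\top g^\top+\gamma S-\tfrac{2\alpha^1\beta^1}{(\beta^1-\alpha^1)^2}W_{\min}W_{\min}^\top+\tfrac{1}{2}\tilde{R}\tilde{R}^\top\leq 0 ,\tag{$\dagger$}
\]
which mirrors exactly the convexification step used in Theorem \ref{the11}. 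In parallel, I would apply the congruence $\mathrm{diag}(S,I)$ to \eqref{CO22} and use $SP=I$ and $HS=Y$: the off-diagonal block becomes $W_{\min}+(\alpha^1+\beta^1)SW^{1\top}$, the $(1,1)$ block becomes $AS+SA^\top+gY+Y^\top g^\top+\gamma S-2\alpha^1\beta^1 SW^{1\top}W^1S$, and the Schur complement of the $-2I$ block gives a companion inequality $(\ddagger)$.

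The crux is then a purely algebraic identity: expanding $(\dagger)$ and $(\ddagger)$ and collecting the terms in $W_{\min}W_{\min}^\top$, in $SW^{1\top}W^1S$, and in the cross term $W_{\min}W^1S+SW^{1\top}W_{\min}^\top$, both reduce to the same expression. The matching rests on the two scalar identities $\tfrac{(\alpha^1+\beta^1)^2}{2}-2\alpha^1\beta^1=\tfrac{(\beta^1-\alpha^1)^2}{2}$ and $\tfrac{(\alpha^1+\beta^1)^2}{2(\beta^1-\alpha^1)^2}-\tfrac{2\alpha^1\beta^1}{(\beta^1-\alpha^1)^2}=\tfrac{1}{2}$, which are precisely what the scaling factors $\tfrac{\alpha^1+\beta^1}{\beta^1-\alpha^1}$ and $\beta^1-\alpha^1$ in $\tilde{R}$, and the factor $\tfrac{2\alpha^1\beta^1}{(\beta^1-\alpha^1)^2}$ in $\bar{R}$, were engineered to produce. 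Hence $(\dagger)$ and $(\ddagger)$ coincide. Once this is established, $(\ddagger)\leq 0$ together with the negative-definite $-2I$ block shows that the congruence transform of \eqref{CO22} is negative semidefinite; since congruence by the invertible $\mathrm{diag}(S,I)$ preserves the sign, \eqref{CO22} itself holds. Thus the hypotheses of Lemma \ref{corollary2} are met with $P=S^{-1}$, $H=YS^{-1}$, $\gamma=\underline{s}/\mu$, and contractivity of \eqref{NNC3} under $u(x)=Hx+W_{u\min}^2 w^1+b_u^2$ follows.

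The main obstacle I anticipate is exactly this coefficient bookkeeping. A naive congruence of \eqref{CO22} would leave the non-convex quadratic term $SW^{1\top}W^1S$ standing in the $(1,1)$ block; the correct packaging must instead route that quadratic term entirely into the $\tilde{R}$ block (and thence, via Schur complement, into the $-2I$ block) while leaving only constant and $S$-affine data in $\bar{R}$, so that \eqref{CO32} is genuinely an LMI in $(S,Y,\mu)$. Verifying that the chosen scalar weights reproduce \eqref{CO22} faithfully is the delicate step; the remainder is the same Schur-complement and comparison-lemma machinery already used in Theorems \ref{CO0} and \ref{the11}.
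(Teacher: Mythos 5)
Your proposal is correct and follows essentially the same route as the paper's proof: two Schur complements on \eqref{CO32}, the bound $\tfrac{1}{\mu}S^2\geq \gamma S$ with $\gamma=\underline{s}/\mu$, and a congruence by $\mathrm{diag}(S,I)$ (equivalently $P=S^{-1}$, $Y=HS$) to recover \eqref{CO22} and invoke Lemma \ref{corollary2}. The scalar identity $(\alpha^1+\beta^1)^2-4\alpha^1\beta^1=(\beta^1-\alpha^1)^2$ that you isolate as the crux is exactly what the paper uses implicitly when asserting its identity \eqref{CO3p1}.
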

\begin{proof}
Following a similar approach as in the proof of Theorem \ref{the11}, by applying the Schur complement to the left side of \eqref{CO32} twice, inequality \eqref{CO32} can be equivalently transformed into
\begin{equation}\label{CO3p1}
\begin{aligned}
&\bar{R}+\frac{1}{2}\tilde{R}\tilde{R}^{\top}+\frac{1}{\mu}S^2=
AS+SA^\top+gY+Y^\top g^\top+\\&\frac{1}{\mu}S^2-2\alpha^1\beta^1SW^{1\top}W^{1}S+\frac{1}{2}(W_{\min}+(\alpha^1+\beta^1)SW^{1\top})\\&\times(W_{\min}+(\alpha^1+\beta^1)SW^{1\top})^\top\leq 0.
\end{aligned}
\end{equation}
By choosing $\gamma=\frac{\underline{s}}{\mu}$, $S=P^{-1}$, $H=YS^{-1}$, and using \eqref{CO3p1}, we have
\begin{equation}\label{CO3p3}
\begin{aligned}
&P^{-1}\Big(L'+\frac{1}{2}(PW_{\min}+(\alpha^1+\beta^1)W^{1\top})\\&\times(PW_{\min}+(\alpha^1+\beta^1)W^{1\top})^\top\Big)P^{-1}\\&=\bar{R}+\frac{1}{2}\tilde{R}\tilde{R}^{\top}+\gamma S
\leq
\bar{R}+\frac{1}{2}\tilde{R}\tilde{R}^{\top}+\frac{1}{\mu}S^2\leq 0,
\end{aligned}
\end{equation}
where $L'$, $W_{min}$, $W^1$, $\alpha^1$, $\beta^1$ are as defined in Lemma \ref{corollary2}. Then, the inequality \eqref{CO3p3} is equivalent to the Schur complement hence to \eqref{CO22}. Since the conditions in Lemma \ref{corollary2} are satisfied, the conclusion follows immediately.  \end{proof}
\begin{remark}\label{rem4}
By choosing $\gamma=\frac{\underline{s}}{\mu}$, $S=P^{-1}$, $H=YS^{-1}$, we have $\underline{p}=\frac{1}{\overline{s}}$, $\overline{p}=\frac{1}{\overline{s}}$, and $\frac{1}{\gamma}\sqrt{\frac{\overline{p}}{\underline{p}}}=\frac{\mu}{\underline{s}}\sqrt{\frac{\overline{s}}{\underline{s}}}$. Then the optimization problem \eqref{OPr} is reformulated as solving the following optimization problem:
\begin{align}
\underset{S, Y, \mu}{\text{minimize}} 
\quad & \frac{\mu}{\underline{s}}\sqrt{\frac{\overline{s}}{\underline{s}}}\nonumber\\
\text{s.t.} \quad & \eqref{CO31}, \eqref{CO32}.\label{OPC}
\end{align}
According to Proposition \ref{proposition3} (see Remark \ref{rem3}), the state $x(t)$ converges to the neighborhood of the unknown equilibrium $x^*$, whose size is given by $\frac{4\epsilon\mu}{\underline{s}}\sqrt{\frac{\overline{s}}{\underline{s}}}$.
\end{remark}

We present the complete algorithm for designing the Hopfield NN-based controller for system \eqref{ND0}. The procedure starts by learning the unknown drift dynamics $f(x)$ in \eqref{ND0} using the neural network \eqref{NN}, which is trained by solving the optimization problem \eqref{aca}, yielding an upper bound $\epsilon$ for the approximation error $\varepsilon(x)$. Next, compute the parameters $\alpha^1$ and $\beta^1$ using the trained parameters $W^1$, $b^1$, and the input $x$ from the set $\mathcal{D}$. Then, determine the parameter $W_{u\min}^{2}$ of the NN controller by solving the problem $\underset{W_u^{2}}{\text{min}}\|W^2+gW_u^{2}\|$. Finally, solve the optimization problem \eqref{OPC} to find  $\mu$, $\overline{s}$, $\underline{s}$, and calculate the control gain $H$ by $H=YS^{-1}$. The parameter $\gamma$ is determined by $\gamma=\frac{\underline{p}}{\mu}$. The controller is given by $u(x)=Hx+W_{u\min}^{2}w^1$,  and the neighborhood is defined as $\|x-x^*\|\leq \frac{4\epsilon \mu}{\underline{s}}\sqrt{\frac{\overline{s}}{\underline{s}}}$.
The entire procedure is summarized in Algorithm \ref{alg:HNN_controller}.
\begin{algorithm}[H]
\caption{Hopfield Neural Network Based Controller Design}
\label{alg:HNN_controller}
\begin{algorithmic}[1]
\REQUIRE NODE system \eqref{NDL}, input set $\mathcal{D}$, trained parameters $W^{i+1}$, $b^{i+1}$
\ENSURE Control gain $H$, NN parameter $W_{u\min}^{2}$
\vspace{0.5em}
\STATE \textbf{Step 1: Learn System Dynamics}
\STATE Train the unknown drift dynamics $f(x)$ by solving \eqref{aca}.
\vspace{0.5em}
\STATE \textbf{Step 2: Compute Parameters $\alpha^1$, $\beta^1$}
\STATE Using the trained parameters $W^1$, $b^1$, and inputs $x \in \mathcal{D}$, compute $\alpha^1$, $\beta^1$.
\STATE \textbf{Step 3: Determine Neural Network Parameter $W_{u\min}^{2}$}
\STATE Solve $\min \|W^2 +gW_u^{2}\|$ to compute $W_{u\min}^{2}$.
\STATE \textbf{Step 4: Solve Optimization Problem for Control Parameters}
\STATE Solve the optimization problem \eqref{OPC} to determine $\mu$, $S$, and $Y$.

\STATE \textbf{Step 5: Compute Control Gain and the size of neighborhood}
\STATE Compute the control gain as $H = YS^{-1}$.
\STATE Construct the controller as $u(x)=Hx+W_{u\min}^{2}w^1$.
\STATE Compute $\gamma$ as $\gamma=\frac{\underline{s}}{\mu}$.
\STATE \textbf{Output:} Control gain $H$, NN parameter $W_{u\min}^{2}$, the size of neighborhood $\frac{4\epsilon \mu}{\underline{s}}\sqrt{\frac{\overline{s}}{\underline{s}}}$.
\end{algorithmic}
\end{algorithm}

\subsubsection{‌Multilayer neural network case}
By applying Theorem \ref{theorem1}, the following result establishes the contractivity of the previously non-contractive NODE system \eqref{NDL}.
\begin{theorem}\label{theorem2}
Suppose there exist a positive definite matrix $P$, positive constants $\underline{p}$, $\overline{p}$, $\gamma$, and a matrix $H$, such that \eqref{TH110} and the following condition 
\begin{equation}\label{TH22}
\begin{aligned}
&U_1^\top
\begin{bmatrix}
     (A+gH)^\top P+P(A+gH)+\gamma P &PW_{\min}\\
   \ast & 0
\end{bmatrix}U_1+\\&U_2^\top\begin{bmatrix}
 -2Q_1^\top Q_2&Q_1^\top+Q_2^\top
 \\
\ast&-2I
\end{bmatrix}U_2\leq 0,
\end{aligned}
\end{equation}
hold, where $U_1$, $U_2$, $W^{k+1}$, $Q_1$, $Q_2$ are given in Theorem \ref{theorem1}. Then, the learned affine system \eqref{NNC3} is contractive in $\mathcal{C}$ under the controller specified by \eqref{NNC11}.
\end{theorem}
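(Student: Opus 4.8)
The plan is to recognize that Theorem~\ref{theorem2} is the exact closed-loop counterpart of Theorem~\ref{theorem1} and to reduce it to that result by a structural substitution. First I would observe that under the ``Nonlinearity Cancellation'' controller \eqref{NNC11}, the closed-loop system \eqref{NNC3} takes precisely the form of the open-loop NODE system \eqref{NDL}, but with the drift matrix $A$ replaced by $A+gH$ and the outer weight $W^{k+1}$ replaced by $W_{\min}=W^{k+1}+gW_{u\min}^{k+1}$ (see \eqref{NNC5}). The crucial structural fact is that $u_{nn}$ shares the same inner layers, weights $W^1,\dots,W^k$, and biases as $Z_{nn}$; only the outer layer and the added linear feedback $gHx$ are modified. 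Consequently the hidden-layer activations $w^1,\dots,w^k$---and hence the matrices $Q_1$, $Q_2$ and the incremental sector-bound inequality of Lemma~\ref{lemma1}---remain completely unchanged by the control.

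Next I would mirror the proof of Theorem~\ref{theorem1} on the closed-loop system. Taking two trajectories $x_1(t),x_2(t)$ of \eqref{NNC3} and multiplying \eqref{TH22} on the left and right by the difference vector $\bbm{w_1^0-w_2^0\\ \bar w_1^{\phi}-\bar w_2^{\phi}}$ (with $w_j^0=x_j$), together with the definitions of $U_1$ and $U_2$, yields the closed-loop analogue of \eqref{TH1P1}, i.e. the inequality obtained from \eqref{TH1P1} under the substitutions $A\mapsto A+gH$ and $W^{k+1}\mapsto W_{\min}$. Then I would take $V(x_1-x_2)=(x_1-x_2)^\top P(x_1-x_2)$ and differentiate along \eqref{NNC3}; since the constant term $b^{k+1}+gb_u^{k+1}$ cancels in the difference of the two vector fields, the computation reproduces \eqref{TH1P2} up to the same substitutions, giving $\dot V=(x_1-x_2)^\top((A+gH)^\top P+P(A+gH))(x_1-x_2)+2(x_1-x_2)^\top PW_{\min}(w_1^k-w_2^k)$.

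Combining $\dot V$ with the closed-loop version of \eqref{TH1P1}, the quadratic form equal to $\dot V+\gamma V$ is bounded above by the negative of the sector-bound form associated with $\sbm{-2Q_1^\top Q_2&Q_1^\top+Q_2^\top\\ \ast&-2I}$; by Lemma~\ref{lemma1} this sector-bound form is nonnegative, so it contributes a nonpositive term and one obtains $\dot V\le-\gamma V$. Applying the Comparison Lemma together with the bounds \eqref{TH110} then reproduces the exponential estimate $\|x_1(t)-x_2(t)\|\le\sqrt{\overline p/\underline p}\,e^{-\gamma t/2}\|x_1(0)-x_2(0)\|$, which is exactly the contraction property of Definition~\ref{d1} with $c=\sqrt{\overline p/\underline p}$ and $\alpha=\gamma/2$.

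Because the argument is essentially a reduction, there is no serious analytical obstacle; the one point requiring care---and the conceptual heart of the result---is verifying that the closed loop preserves the entire inner-layer structure of the network, so that $Q_1,Q_2$ are invariant and the substitution $(A,W^{k+1})\mapsto(A+gH,W_{\min})$ maps condition \eqref{TH12} onto condition \eqref{TH22}. Once this is confirmed, Lemma~\ref{lemma1} applies unchanged and Theorem~\ref{theorem1} carries over verbatim to the closed-loop system, so the conclusion follows.
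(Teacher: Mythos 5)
Your proposal is correct and is essentially the paper's own argument: the paper states Theorem~\ref{theorem2} as a direct application of Theorem~\ref{theorem1} under the substitutions $A\mapsto A+gH$ and $W^{k+1}\mapsto W_{\min}$, which is exactly the reduction you carry out, including the key observations that the inner layers (hence $Q_1$, $Q_2$ and Lemma~\ref{lemma1}) are unaffected by the controller and that the bias terms cancel when differentiating $V(x_1-x_2)$ along two trajectories. Your explicit Lyapunov computation mirrors \eqref{TH1P1}--\eqref{TH1P2} verbatim, so nothing is missing.
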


Similarly to the Hopfield neural network case, i.e., Theorem \ref{the21}, the next theorem reformulates the contractive condition \eqref{TH22} into a convex LMI.

\begin{theorem}\label{theorem3}
Suppose there exist a fixed positive constant $\underline{s}$, positive definite matrices $S$, positive constants $\mu$,  $\overline{s}$, and a matrix $Y$, such that \eqref{CO31} and the following condition
\begin{equation}\label{TH32}
\begin{aligned}
&\bar{U}_1^\top
\bbm{
-\mu I & S & 0 \\
\ast &  (AS+gY)^\top + AS + gY & W_{\min} \\
\ast & \ast & 0
}
\bar{U}_1 + \\
&\bar{U}_2^\top
\bbm{
0 & 0 & 0 \\
\ast & \bar{X} &\tilde{X} \\
\ast & \ast & -2I
}
\bar{U}_2 \leq 0, \quad \forall k \geq 2, 
\end{aligned}
\end{equation}
hold, where $\bar{X}=\bbm{-2\underline{s} \underline{\lambda}S&0\\0&-2Q^{a\top}_1 Q^a_2}$, $\underline{\lambda}$ is the minimum eigenvalue of $Q^{1\top}_1 Q^1_2$,  $\tilde{X}=\bbm{S(Q^1_1+Q^1_2)^\top&0\\0&(Q^a_1+Q^a_2)^\top}$, $Q^a_1=\mathrm{BD}(K_1^2W^2, \dots, K_1^{k}W^{k})$, $Q^a_2=\mathrm{BD}(K_2^2W^2, \dots, K_2^{k}W^{k})$, $\bar{U}_1$, $\bar{U}_2$, $W^{k+1}$ are given in Theorem \ref{theorem11}. Then, the learned affine system \eqref{NNC3} is contractive in $\mathcal{C}$ under the controller specified by \eqref{NNC10} with $H=YS^{-1}$.
\end{theorem}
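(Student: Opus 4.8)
The plan is to prove that the convex LMI \eqref{TH32} implies the non-convex condition \eqref{TH22}, after which contractivity of \eqref{NNC3} follows verbatim from Theorem \ref{theorem2}. The argument fuses two techniques already present in the excerpt: the block-diagonal $\bar U$ reduction from the proof of Theorem \ref{theorem11}, and the double-Schur-complement-plus-congruence device from the proof of Theorem \ref{the21}, now combined and carried over to the multilayer controlled system.

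First I would use $\bar U_1=\mathrm{BD}(I,U_1)$ and $\bar U_2=\mathrm{BD}(I,U_2)$, exactly as in \eqref{TH121P1}--\eqref{TH121P3}, to rewrite the left-hand side of \eqref{TH32} in the bordered form $\bbm{-\mu I & (S\ 0)U_1\\ \ast & \bar L}\le 0$, where $\bar L$ collects the $U_1$-block and the $\bar U_2$-sector block with the $-\mu I$ border stripped off. A Schur complement on $-\mu I$ then injects $\tfrac1\mu S^2$ into the state slot of $\bar L$, mirroring \eqref{TH121P4}. I would then substitute $S=P^{-1}$, $Y=HS$ (so $AS+gY=(A+gH)S$) and $\gamma=\underline s/\mu$.

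The comparison is carried out block by block in the full coordinate $\zeta=\bbm{x-x_{nn}\\ \mathrm{mid}\\ w^k-w^k_{nn}}$. The key observation is that, because $w^0=x$, the first-layer sector block $-2Q_1^{1\top}Q_2^1$ sits in the \emph{state} slot, so the $S$-congruence of the \eqref{TH22} state block reads $(AS+gY)^\top+(AS+gY)+\gamma S-2SQ_1^{1\top}Q_2^1 S$, whereas the reduced \eqref{TH32} state block reads $(AS+gY)^\top+(AS+gY)+\tfrac1\mu S^2-2\underline s\,\underline\lambda\,S$. Two elementary bounds then close the gap: $\tfrac1\mu S^2\ge\gamma S$ (from $S\ge\underline sI$ and $\gamma=\underline s/\mu$, as in \eqref{THp121P5}), and, writing $M=Q_1^{1\top}Q_2^1\ge\underline\lambda I$, the chain $SMS\ge\underline\lambda S^2\ge\underline\lambda\,\underline s\,S$, hence $-2SMS\le-2\underline s\,\underline\lambda\,S$. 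Every remaining entry—the off-diagonals $W_{\min}$ and $S(Q_1^1+Q_2^1)^\top$, the deeper-layer sector blocks $-2Q_1^{a\top}Q_2^a$ and $(Q_1^a+Q_2^a)^\top$, and the $-2I$ block—touches only hidden-layer coordinates and is therefore left untouched by the state congruence, so these match verbatim between \eqref{TH32} and the transformed \eqref{TH22}. The two bounds give (transformed \eqref{TH22}) $\le$ (reduced \eqref{TH32}) $\le 0$; congruencing back by $P$ yields \eqref{TH22}, and Theorem \ref{theorem2} finishes the proof.

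The main obstacle is exactly the first-layer sector term. The convexifying congruence acts by $S=P^{-1}$ on the state but leaves the hidden activations alone, so only the first-layer block $Q_1^{1\top}Q_2^1$—the one multiplying $w^0=x$—is squeezed into the quadratic-in-$S$ expression $SMS$, which is not affine in the decision variable $S$ and hence not an LMI. This is precisely why the theorem partitions $Q_1,Q_2$ into $Q_1^1,Q_2^1$ and the remainder $Q_1^a,Q_2^a$: the non-LMI term $SMS$ must be linearized, which is done through the eigenvalue floor $\underline\lambda=\lambda_{\min}(Q_1^{1\top}Q_2^1)$ and the lower bound $\underline s$ on $S$ to produce the affine surrogate $-2\underline s\,\underline\lambda\,S$. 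The delicate part is the bookkeeping: tracking which blocks carry the state (and are thus congruenced) versus which carry only activations (and pass through), and checking that the surrogate inequality points in the conservative direction in every affected block, so that \eqref{TH32} dominates the transformed \eqref{TH22} rather than the reverse.
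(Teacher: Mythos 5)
Your proposal is correct and takes essentially the same route as the paper's own proof: the $\bar U_i=\mathrm{BD}(I,U_i)$ reduction plus a Schur complement on the $-\mu I$ border, the substitution $S=P^{-1}$, $Y=HS$, $\gamma=\underline{s}/\mu$, the commutation of the state congruence past $U_1,U_2$, and the two conservative bounds $\tfrac{1}{\mu}S^2\ge\gamma S$ and $-2SQ_1^{1\top}Q_2^1S\le-2\underline{s}\,\underline{\lambda}\,S$ (the paper applies the equivalent bounds in $P$-coordinates, \eqref{TH3P7}--\eqref{TH3P8}), before closing via Theorem \ref{theorem2}. One small slip in wording: the off-diagonal blocks $W_{\min}$ and $S(Q_1^1+Q_2^1)^\top$ are not ``left untouched by the state congruence''---they are exactly what the congruence produces from $PW_{\min}$ and $(Q_1^1+Q_2^1)^\top$ in \eqref{TH22}---but this does not affect the validity of your block-matching argument.
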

\begin{proof}
Observe that $\bar{U}_1=\bbm{I_{m_{k+1}\times m_{k+1}}&0\\0&U_1}$, $\bar{U}_2=\bbm{I_{m_{k+1}\times m_{k+1}}&0\\0&U_2}$. Following a similar approach as \eqref{TH121P1}-\eqref{TH121P4} in the proof of Theorem \ref{theorem11}, but substituting $A$ and $P$ with $AS + gY$ and $S$, respectively, we can rewrite \eqref{TH32} as
\begin{equation}\label{TH3P1}
\begin{aligned}
&U_1^\top
\bbm{
(AS+gY)^\top + AS + gY+\frac{1}{\mu}S^2 & W_{\min} \\
\ast & 0
}
U_1 + \\
&U_2^\top
\bbm{
\bar{X} &\tilde{X} \\
\ast & -2I
}
U_2 \leq 0, 
\end{aligned}
\end{equation}
We can express the left hand side of \eqref{TH3P1} in the following form:
\begin{equation}\label{TH3P40}
\begin{aligned}
&\Big(\bbm{
S&0\\0&I
}U_1\Big)^\top
\bbm{
X & S^{-1}W_{\min} \\
\ast & 0
}
\bbm{
S&0\\0&I
}U_1 + \\
&\Big(\bbm{
S&0&0&0\\\ast&I&0&0\\\ast&\ast&I&0\\\ast&\ast&\ast&I
}U_2\Big)^\top\\&\times
\bbm{\bbm{-2\underline{s}\underline{\lambda}S^{-1}&0\\0&-2Q^{a\top}_1 Q^a_2}
 &*\\
\bbm{(Q^1_1+Q^1_2)^\top&0\\0&(Q^a_1+Q^a_2)^\top}^\top  & -2I
}
\\&\times\bbm{
S&0&0&0\\\ast&I&0&0\\\ast&\ast&I&0\\\ast&\ast&\ast&I
}U_2, 
\end{aligned}
\end{equation}
where $X=(S^{-1}A + S^{-1}gYS^{-1})^\top + S^{-1}A + S^{-1}gYS^{-1}+\frac{1}{\mu}I$.

Note that
\begin{equation}\label{TH3P2}
\begin{aligned}
&U_1\bbm{
S&0&0\\\ast&I&0\\\ast&\ast&I
}=\bbm{
S&0\\0&I
}U_1,
\end{aligned}
\end{equation}
and
\begin{equation}\label{TH3P3}
\begin{aligned}
&U_2\bbm{
S&0&0\\\ast&I&0\\\ast&\ast&I
}=\bbm{
S&0&0&0\\\ast&I&0&0\\\ast&\ast&I&0\\\ast&\ast&\ast&I
}U_2,
\end{aligned}
\end{equation}
Using \eqref{TH3P2}, \eqref{TH3P3}, we can rewrite \eqref{TH3P4} as:
\begin{equation}\label{TH3P4}
\begin{aligned}
&\bbm{
S&0&0\\\ast&I&0\\\ast&\ast&I
}\Big(U_1^\top
\bbm{
X & S^{-1}W_{\min} \\
W_{\min}^{\top}S^{-1} & 0
}
U_1 +\\
& U_2^\top
\bbm{\bbm{-2\underline{s}\underline{\lambda}S^{-1}&0\\0&-2Q^{a\top}_1 Q^a_2}
 &\ast\\
\bbm{(Q^1_1+Q^1_2)^\top&0\\0&(Q^a_1+Q^a_2)^\top}^\top  & -2I
}
U_2\Big)\\&\times\bbm{
S&0&0\\\ast&I&0\\\ast&\ast&I
}.
\end{aligned}
\end{equation}
Using $P=S^{-1}$, $H=YS^{-1}$, we can rewrite \eqref{TH3P4} as
\begin{equation}\label{TH3P5}
\begin{aligned}
&\bbm{
P^{-1}&0&0\\\ast&I&0\\\ast&\ast&I
}\times\\&\Big(U_1^\top
\bbm{
(A+gH)^\top P+P(A+gH)+\frac{1}{\mu}I & PW_{\min} \\
\ast & 0
}
U_1 \\
&+ U_2^\top
\bbm{\bbm{-2\underline{s}\underline{\lambda}P&0\\0&-2Q^{a\top}_1 Q^a_2}
 &\ast\\
\bbm{(Q^1_1+Q^1_2)^\top&0\\0&(Q^a_1+Q^a_2)^\top}^\top  & -2I
}
U_2\Big)\\&\times\bbm{
P^{-1}&0&0\\\ast&I&0\\\ast&\ast&I
}.
\end{aligned}
\end{equation}
Since $\bbm{
P^{-1}&0&0\\\ast&I&0\\\ast&\ast&I
}$ is invertible, using \eqref{TH3P5}, the inequality \eqref{TH3P1} is equivalent to
\begin{equation}\label{TH3P6}
\begin{aligned}
&U_1^\top
\bbm{
(A+gH)^\top P+P(A+gH)+\frac{1}{\mu}I & PW_{\min} \\
\ast & 0
}
U_1 +\\
& U_2^\top
\bbm{\bbm{-2\underline{s}\underline{\lambda}P&0\\0&-2Q^{a\top}_1 Q^a_2}
 &\ast\\
\bbm{(Q^1_1+Q^1_2)^\top&0\\0&(Q^a_1+Q^a_2)^\top}^\top  & -2I
}
U_2\leq 0.
\end{aligned}
\end{equation}
Using $\underline{s}P\geq I$, $\gamma=\frac{\underline{s}}{\mu}$, we have 
\begin{equation}\label{TH3P7}
\begin{aligned}
&U_1^\top\bbm{\gamma P&0\\0&0}U_1=\bbm{\gamma P&0&0\\\ast&0&0\\\ast&\ast&0}\leq \bbm{\frac{1}{\mu}I&0&0\\\ast&0&0\\\ast&\ast&0}\\&=U_1^\top\bbm{\frac{1}{\mu}I&0\\0&0}U_1,
\end{aligned}
\end{equation}
and
\begin{equation}\label{TH3P8}
\begin{aligned}
&U_2^\top\bbm{-2Q_1^{1\top}Q_2^1&0&0&0\\\ast&0&0&0\\\ast&\ast&0&0\\\ast&\ast&\ast&0}U_2=\bbm{-2Q_1^{1\top}Q_2^1&0&0&0\\\ast&0&0&0\\\ast&\ast&0&0\\\ast&\ast&\ast&0}\\&\leq \bbm{-2\underline{s}\underline{\lambda}P&0&0&0\\\ast&0&0&0\\\ast&\ast&0&0\\\ast&\ast&\ast&0}=U_2^\top\bbm{-2\underline{s}\underline{\lambda}P&0&0&0\\\ast&0&0&0\\\ast&\ast&0&0\\\ast&\ast&\ast&0}U_2.
\end{aligned}
\end{equation}
Then inequity \eqref{TH22} in Theorem \ref{theorem1} is satisfied by applying inequities \eqref{TH3P6}, \eqref{TH3P7}, and \eqref{TH3P8}. The conditions in Theorem \ref{theorem2} are met and the
conclusion of the theorem follows directly. 
\end{proof}
\begin{remark}\label{rem5}
The controller design problem can be formulated as the following optimization problem (cf. \eqref{OPC}):
\begin{align}
\underset{S, Y, \mu}{\text{minimize}} 
\quad & \frac{\mu}{\underline{s}}\sqrt{\frac{\overline{s}}{\underline{s}}}\nonumber\\
\text{s.t.} \quad & \eqref{CO31}, \eqref{TH32}.\label{OPC1}
\end{align}
\end{remark}

We provide the complete algorithm for designing the Multilayer NN-based controller for system \eqref{ND0} in Algorithm \ref{alg:MNN_controller}. The key distinction from Algorithm \ref{alg:HNN_controller} lies in the computation of the parameters $\alpha^{i+1}$ and $\beta^{i+1}$, which are derived from the trained parameters $W^{i+1}$, $b^{i+1}$, $\phi^{i+1}$, and $w^i$. Additionally, the optimization problem \eqref{OPC1} is solved to determine $\mu$ and $\overline{s}$.
\begin{algorithm}[H]
\caption{Multilayer Neural Network Based Controller Design}
\label{alg:MNN_controller}
\begin{algorithmic}[1]
\REQUIRE NODE system \eqref{NDL}, input set $\mathcal{D}$, trained parameters $W^{i+1}$, $b^{i+1}$
\ENSURE Control gain $H$, NN parameter $W_{u\min}^{k+1}$
\vspace{0.5em}
\STATE \textbf{Step 1: Learn System Dynamics}
\STATE Train the unknown drift dynamics $f(x)$ by solving \eqref{aca}.
\STATE \textbf{Step 2: Compute Parameters $\alpha^{i+1}$, $\beta^{i+1}$}
\STATE Using the trained parameters $W^{i+1}$, $b^{i+1}$, $\phi^{i+1}$, and $w^{i}$ compute $\alpha^{i+1}$, $\beta^{i+1}$.
\STATE \textbf{Step 3: Determine Neural Network Parameter $W_{u\min}^{k+1}$}
\STATE Solve $\min \|W^{k+1} +gW_u^{k+1}\|$ to compute $W_{u\min}^{k+1}$.
\STATE \textbf{Step 4: Solve Optimization Problem for Control Parameters}
\STATE Solve the optimization problem \eqref{OPC1} to determine $\mu$, $S$, and $Y$.
\STATE \textbf{Step 5: Compute Control Gain and the size of neighborhood}
\STATE Compute the control gain as $H = YS^{-1}$.
\STATE Construct the controller as $u(x)=Hx+W_{u\min}^{k+1}w^k$.
\STATE Compute $\gamma$ as $\gamma=\frac{\underline{s}}{\mu}$.
\STATE \textbf{Output:} Control gain $H$, NN parameter $W_{u\min}^{k+1}$, the size of neighborhood $\frac{4\epsilon \mu}{\underline{s}}\sqrt{\frac{\overline{s}}{\underline{s}}}$.
\end{algorithmic}
\end{algorithm}
\section{Simulation setup}\label{secss}
In this section, we provide two numerical examples to demonstrate the proposed method. In the first example, we design an NN-based controller $u(x)$, where the NN component $u_{nn}(x)$ effectively cancels out most of the system's nonlinearity. The second example showcases a scenario where the nonlinearity is difficult to cancel using $u_{nn}(x)$. Nevertheless, due to the incremental sector bound property of $Z_{nn}(x)$, the contractivity of the NODE system can still be achieved, and the trajectories of the original system convergence to a small neighborhood of its equilibrium. In both examples, we use the tanh function as the activation function.

\subsubsection{‌Hopfield (Single-layer) neural network case} 
\begin{example}\label{example2}
Consider the following pendulum dynamics
\begin{equation}\label{E2}
\left\{\begin{matrix}
\dot{x}_1=x_2,\\
\dot{x}_2=-\frac{g}{l}\sin(x_1)-\frac{k}{m} x_2+\frac{1}{ml^2}u(x),
\end{matrix}\right.
\end{equation} 
where $m$ is the mass to be balanced, $l$ is the distance from the base to the center of mass of the balanced body, $k$ is the coefficient of rotational friction, and $g$ is the acceleration due to gravity. The states $x_1$, $x_2$ are the angular position and velocity, respectively, $u(x)$ is the applied torque. Suppose that the parameters are $g=9.81$, $m$ = 0.15, $l=5$. To better illustrate our approach, we set $k=0$, which implies that the uncontrolled system converges to a limit cycle rather than an equilibrium. 

The data points $x(t_k)$ and $\dot{x}(t_k)$ are obtained from the pendulum dynamics \eqref{E2} where $x(t_k)$ lies within the range $\bbm{-2,&2}^2$. We learn the drift dynamics of the system, i.e., $f(x)=\bbm{x_2\\-1.962\sin(x_1)}$ by solving the problem \eqref{aca} by an NN with a single-hidden-layer containing 5 neurons. We have $A=\bbm{   -0.0000 &   0.9996\\
   -1.7654  &  0.0012}$, $W^1=  \bbm{
-0.0365 &  -0.0220 &   -0.0061  &  -0.0180 &   0.0169\\
    0.0100& 0.0123 &  0.0010 & -0.0063 &   0.0167
}^\top$, $W^2=\bbm{
          0.0000 &   0.0120 &  -0.0029 &  -0.0054 &   0.0101\\
    0.0351 &   0.0267  &  0.0051  &  0.0134  & -0.0181 
}$, $b^1=\bbm{
       -0.4525&
   -0.0774&
   -0.1071&
   -0.3386&
    0.1009
}^\top\times10^{-3}$, $b^2= \bbm{
           -0.0000&
    0.0016
}^{\top}$. After 5000 epochs, the upper bound $\epsilon$ of the approximation error $\varepsilon(x)$ is 0.002302. 
Since for all $x_1,x_2\in\bbm{-2,&2}$, we have
\begin{equation}\label{E22}
0.9978\leq\frac{\varphi^1 (x)-\varphi^1 (y)}{x-y}\leq 1.
\end{equation} 
By solving $W_{\min}^{1'}=\underset{W^{1'}}{\text{argmin}} \|W^1+gW^{1'}\|$, we obtain $W_{\min}^{1'}=\bbm{
     0 &  0 &   0 &  0  &  0\\
    -0.0351 &   -0.0267  &  -0.0051  &  -0.0134  & 0.0181          
}$, and $W_{\min}=\bbm{
      -0.0000  &  0.0120 &  -0.0029  & -0.0054  &  0.0101\\  
   0 & 0 &  0  &  0  & 0       
}$.
Applying the incremental sector bound \eqref{E22}, the $\bar{R}$ and $R'$ in the contractive condition \eqref{CO32} are given by $\bar{R}=-102850W_{\min} W_{\min}^\top+A^\top S+SA+gY+Y^\top g^\top $, $\tilde{R}=453.5455 W_{\min} +0.0044SW_1^{\top}$. 

By solving the optimization problem \eqref{OPC}, we have $S=\bbm{0.4807  & -0.2379
   \\-0.2379  &  1.0928}$, $Y=\bbm{1.3799  & -5.7854}$, $H=\bbm{0.2808 &  -5.2330}$, $\mu=0.61168$, where the state ultimately converges to $\frac{4\epsilon \mu}{\underline{s}}\sqrt{\frac{\overline{s}}{\underline{s}}}=0.0239$. Figure \ref{fig:21} illustrates the trajectories of the system with four distinct initial conditions $\bbm{1&1}^{\top}$, $\bbm{-1&-1}^{\top}$, $\bbm{-1.5&1.5}^{\top}$, $\bbm{1.5&-1.5}^{\top}$. 

   \begin{figure}
    \centering
    \includegraphics[width=1.1\linewidth]{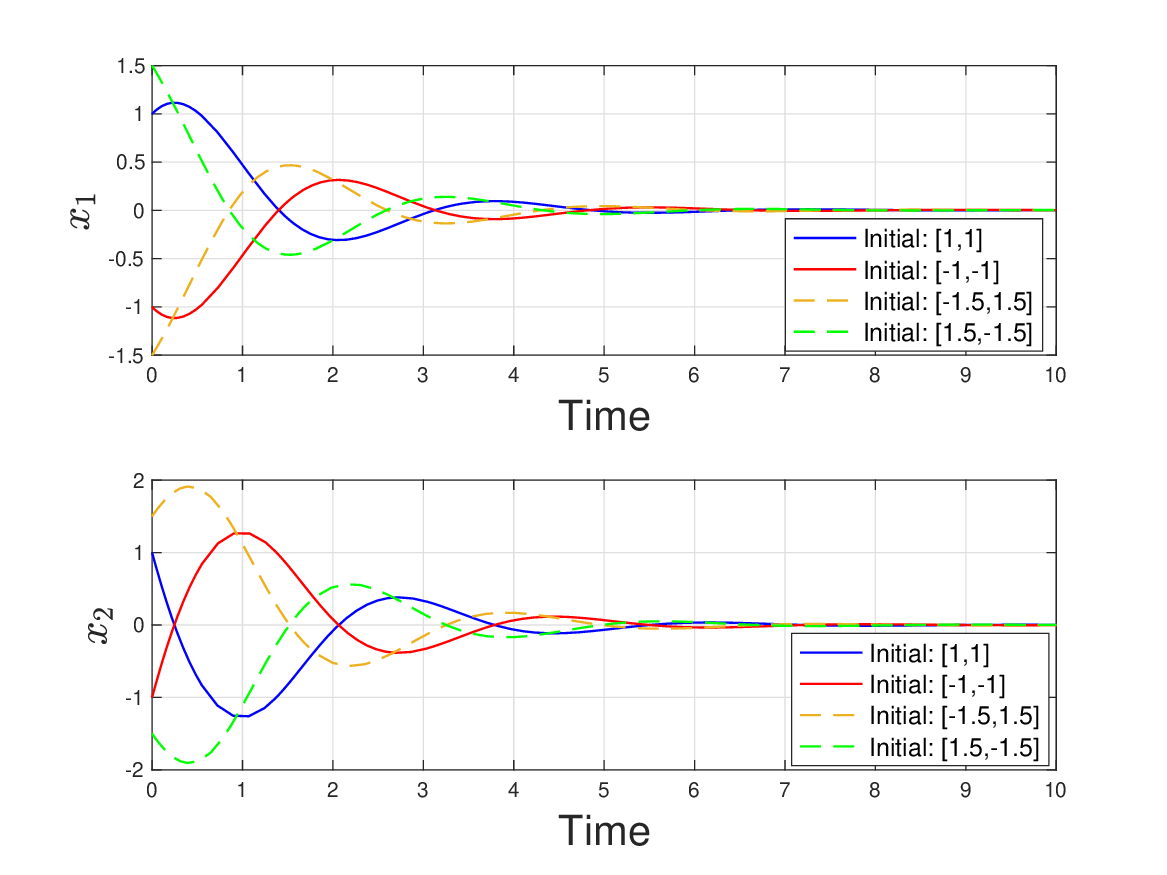}
    \caption{The plot shows the trajectories of the controlled system, $x_i$ in Example \ref{example2}
initialized at $\bbm{1&1}^{\top}$, $\bbm{-1&-1}^{\top}$, $\bbm{-1.5&1.5}^{\top}$, $\bbm{1.5&-1.5}^{\top}$.}
    \label{fig:21}
\end{figure}
\end{example}

\subsubsection{‌Multilayer neural network case}
\begin{example}\cite[Ex. 2]{samanipour2023stability}\label{example3}
Consider a wheeled
vehicle path following system 
\begin{equation}\label{E3}
\left\{\begin{matrix}
\dot{d}_e=\nu\sin(\theta_e)\\
\dot{\theta}_e=\omega-\frac{\nu\kappa(s)\cos(\theta_e)}{1-\kappa(s)d_e}.
\end{matrix}\right.
\end{equation}   
The variables are defined as follows: 
$\theta_e$ represents the heading error, $d_e$ denotes the lateral error, $\nu$ is the linear velocity, 
$\kappa(s)$ describes the curvature of the reference path, and $\omega$ represents the angular velocity, which we consider as the control input. We assumed that the target path is a unit circle, $\kappa(s)=0.5$, and the linear velocity $\nu=1$.
In this example, since the controller is only present in the $\dot{\theta}_e$ channel, it is difficult to eliminate the nonlinearity in the $\dot{d}_e$ channel, i.e., the term $\nu\sin(\theta_e)$ cannot be canceled by the controller. 

In \cite{samanipour2023stability}, a single-hidden-layer ReLU NN with 50 neurons is used to model the system without considering the approximation error, leading to a high-dimensional LMI. In contrast, we employ a two-hidden-layer NN with 5 neurons for each layer to represent the drift dynamics. Despite this difference in architecture, our NN achieves the same level of approximation accuracy as the 50-neuron single-hidden-layer ReLU NN but with a lower dimensional LMI. For a two-hidden layer NN, by applying Theorem \ref{theorem3}, inequality \eqref{TH32} can be reduced to the following inequality.
\begin{equation}\label{le31}
\begin{aligned}
&\begin{bmatrix}
-\mu I&S &0&0\\\ast&\bar{R}&S(Q^1_1+Q^1_2)^\top&W_{\min}\\\ast&\ast&-2Q_1^{2\top}Q_2^2-2I&(Q_1^2+Q_2^2)^\top\\
\ast&\ast&\ast& -2I
\end{bmatrix}\leq 0,
\end{aligned}
\end{equation}
where $\bar{R}=(AS+gY)^\top + AS + gY-2\underline{s} \underline{\lambda}S$, $\underline{\lambda}$ is the minimum eigenvalue of $Q^{1\top}_1 Q^1_2$.

The data points $x(t_k)$ and $\dot{x}(t_k)$ are obtained from the wheeled vehicle path
following system \eqref{E3} where $x(t_k)$ lies within the range $\bbm{-1,&1}^2$. We learn the drift dynamics of the system, i.e., $f(x)=\bbm{\sin(\theta_e)\\-\frac{0.5\cos(\theta_e)}{1-0.5d_e}}$ by solving the problem \eqref{aca}. We have $A=\bbm{   -0.0011  &  0.9023\\
   -0.2454  & -0.0031}$, $W^1=  \bbm{0.0002 &   -0.0100 &   -0.0005 &   0.0040 &  -0.0114\\
   0.0122  & 0.0122  &  -0.0009  &  0.0062  & -0.0124   
}^\top$, $W^2=\bbm{    0.0057  &  0.0002 &  -0.0091   & 0.0008  & -0.0066\\
   -0.0011 &   0.0046 &  -0.0092  &  0.0105  & -0.0109\\
    0.0163  &  0.0147 &   0.0107 &  -0.0015  & -0.0066\\
   -0.0096  &  0.0111  & -0.0014  &  0.0031   &-0.0076\\
    0.0201&   -0.0088  & -0.0045  & -0.0231 &  -0.0043
}$, $W^3=\bbm{   -0.0048   & 0.0146  &  0.0179  &  0.0011  &  0.0025\\
    0.0072 &  -0.0170  &  0.0142  &  0.0174 &  -0.0113
}$, $b^1=\bbm{0.0447&
   -0.1820&
   -0.1153&
   -0.0527&
    0.0181
}^\top\times10^{-3}$, $b^2=\bbm{
         -0.0031&
    0.0073&
   -0.0061&
   -0.0075&
    0.0049
}^\top$, $b^3= \bbm{
            0.0013&
   -0.4621
}^{\top}$. After 5000 epochs, the upper bound $\epsilon$ of the approximation error $\varepsilon(x)$ is 0.008532. 
Since for all $x,y\in\bbm{-1,&1}$, we have
\begin{equation}\label{E32}
\begin{matrix}
 0.9994\leq\frac{\varphi^1 (x)-\varphi^1 (y)}{x-y}\leq 1,\\
0.9999\leq\frac{\varphi^2 (x)-\varphi^2 (y)}{x-y}\leq 1.
\end{matrix}
\end{equation} 
By solving $W_{\min}^{3'}=\underset{W^{3'}}{\text{argmin}} \|W^3+gW^{3'}\|$, we obtain $W_{\min}^{3'}=\bbm{
   0 &   0 &  0 &   0 & 0\\
   -0.0072 &  0.0170  &  -0.0142  &  -0.0174  & 0.0113
}$, and $W_{\min}=\bbm{
   -0.0048 &   0.0146   & 0.0179   & 0.0011 &   0.0025\\
   0 & 0 &  0  &  0  & 0       
}$. In this case, provided that $\underline{\lambda}$ is sufficiently small, the term $\bar{R}$ in \eqref{le31} can be approximated as $\bar{R}\approx (AS+gY)^\top + AS + gY$.

By solving the optimization problem 
\begin{align}
\underset{S, Y, \mu}{\text{minimize}} 
\quad & \frac{\mu}{\underline{s}}\sqrt{\frac{\overline{s}}{\underline{s}}}\nonumber\\
\text{s.t.} \quad & \eqref{CO31}, \eqref{le31},\label{OPC2}
\end{align} we have $S=\bbm{1.9914 &  -0.7703\\
   -0.7703  &  12.1027}$, $Y=\bbm{
-2.6214 &  -18.6585
}$, $H=\bbm{
-1.9610  & -1.6665
}$, $\mu=8.0543 \times 10^{-15}$, where the state ultimately converges to $\frac{4\epsilon \mu}{\underline{s}}\sqrt{\frac{\overline{s}}{\underline{s}}}=3.5567 \times 10^{-16}$. Figure \ref{fig:31} illustrates the trajectories of the system with four distinct initial conditions $\bbm{0.8&0.8}^{\top}$, $\bbm{-0.8&-0.8}^{\top}$, $\bbm{-0.5&0.5}^{\top}$, $\bbm{0.5&-0.5}^{\top}$. 

\begin{figure}
    \centering
    \includegraphics[width=1.1\linewidth]{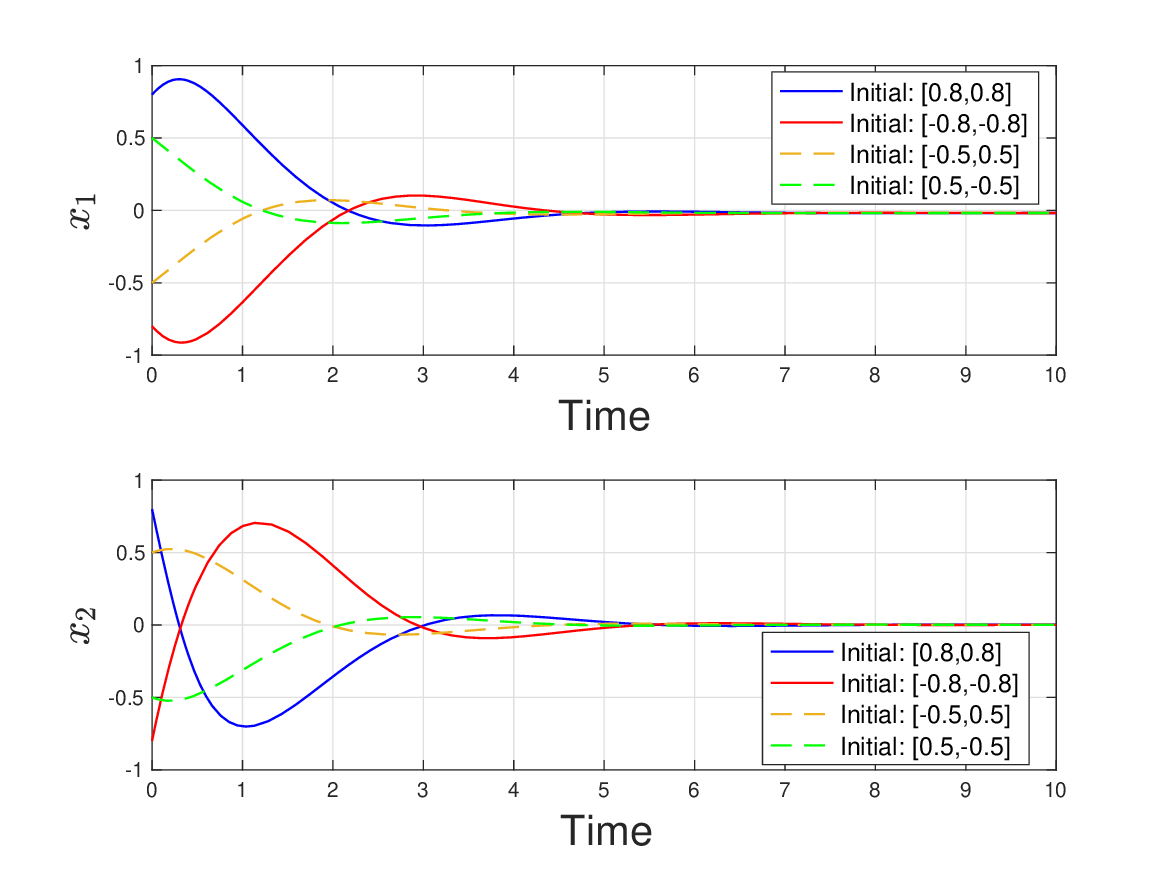}
    \caption{The plot shows the trajectories of the controlled system, $x_i$ in Example \ref{example3}
initialized at $\bbm{0.8&0.8}^{\top}$, $\bbm{-0.8&-0.8}^{\top}$, $\bbm{-0.5&0.5}^{\top}$, $\bbm{0.5&-0.5}^{\top}$.}
    \label{fig:31}
\end{figure}
\end{example}

\section{CONCLUSION}\label{secc}
This paper explores the contraction properties and controller design for unknown nonlinear affine systems through NN approximation. The autonomous system is approximated by a NODE system with incrementally sector-bounded activation functions. To ensure the contractivity of the NODE system, we validate the NN’s weights using a contraction condition. The contractivity of the NODE system indicates that the trajectories of the original autonomous system converge to a neighborhood of the unknown equilibrium point, with the neighborhood size determined by the approximation error. The NODE system can be transformed into a continuous-time Hopfield NN by utilizing a single-layer NN. If the contraction conditions are not satisfied, we propose a contractive NN controller based on the approximated NODE system, comprising an NN component to compensate system's nonlinearities and a linear component whose control gain ensures the NODE system’s contractivity. We prove that the contractive NN controller guarantees the trajectories of the nonlinear affine system converge to a specified neighborhood of the equilibrium point. Additionally, an optimization framework is introduced to minimize the size of this neighborhood. Future work will focus on replacing the traditional LMI approach with Physics-Informed Neural Networks (PINNs) for constructing Lyapunov functions beyond the quadratic form. This is expected to enhance the flexibility of contraction analysis and extend the applicability of the framework to a broader range of systems.

\end{document}